\definecolor{goldenpoppy}{rgb}{0.99, 0.76, 0.0}
\definecolor{richblack}{rgb}{0.06, 0.05, 0.03}
\definecolor{cadmiumred}{rgb}{0.89, 0.0, 0.13}
\definecolor{fuchsia}{rgb}{0.3, 0.0, 0.3}
\definecolor{green(ncs)}{rgb}{0.0, 0.52, 0.32}
\tikzstyle{species_T} = [circle,radius=0.1cm, text centered, draw=black, fill=goldenpoppy]
\tikzstyle{species_R} = [circle,radius=0.1cm, text centered, draw=black, fill=white]
\tikzstyle{species_C} = [circle,radius=0.1cm, text centered, draw=black, fill=fuchsia]
\tikzstyle{dots} = [circle,radius=0.1cm,text centered]
\tikzstyle{arrowA} = [-{Latex[length=2mm]},white,dashed]
\tikzstyle{arrowB} = [-{Latex[length=2mm]},green(ncs)]
\tikzstyle{arrowC} = [-{Latex[length=2mm]},cadmiumred]
\tikzstyle{inhibit} = [thick,-|,black!100]
\tikzstyle{loosely dashed}= [dash pattern=on 3pt off 6pt]
\newtheorem{definition}{Definition}[section]
\newtheorem{theorem}{Theorem}[section]
\newtheorem{example}{Example}[section]
\newtheorem{problem}{Problem}[section]
\let\Item\item
\begin{document}
\vspace*{-1cm}

\centerline{{\huge Simple chemical systems with chaos}}

\medskip
\bigskip

\centerline{
\renewcommand{\thefootnote}{$1$}
{\Large Tomislav Plesa \footnote{
Department of Applied Mathematics and Theoretical Physics, University of Cambridge,
Centre for Mathematical Sciences, Wilberforce Road, Cambridge, CB3 0WA, UK;
e-mail: tp525@cam.ac.uk}, \renewcommand{\thefootnote}{$2$}
Julien Clinton Sprott \footnote{
Department of Physics, University of Wisconsin,
Madison, WI 53706, USA}
}}

\medskip
\bigskip

\noindent
{\bf Abstract}: A number of simple chaotic three-dimensional 
dynamical systems (DSs) with quadratic polynomials 
on the right-hand sides 
are reported in the literature, 
containing exactly $5$ or $6$
monomials of which only 
$1$ or $2$ are quadratic.
However, none of these simple systems are
chemical dynamical systems (CDSs)
- a special subset of polynomial DSs 
that model the dynamics
of mass-action chemical reaction networks (CRNs). 
In particular, only a small number of
three-dimensional quadratic CDSs with chaos 
are reported, 
all of which have at least $9$ monomials 
and at least $3$ quadratics, 
with CRNs containing at least $7$ reactions 
and at least $3$ quadratic ones.
To bridge this gap, in this paper 
we prove some basic properties of chaotic CDSs, 
including that those in three dimensions
have at least $6$ monomials,
at least one of which is  
negative and quadratic. 
We then use these results 
to computationally find $20$ chaotic three-dimensional 
CDSs with $6$ monomials and as few as $4$ quadratics, 
or $7$ monomials and as few as $2$ quadratics.
At the CRN level, some of these systems
have $4$ reactions of which only $3$ are quadratic, 
or $5$ reactions with only $2$ being quadratic.
These results quantify structural
complexity of chaotic CDSs, and indicate 
that they are ubiquitous. 

\section{Introduction}
Systems of $N$ first-order autonomous
ordinary-differential equations with 
polynomials of at most degree $n$
on the right-hand side, called
$N$-dimensional $n$-degree 
polynomial \emph{dynamical systems} (DSs), 
can display chaos when $N \ge 3$
and $n \ge 2$~\cite{Wiggins,Sprott_book_1}. 
A fundamental task is to characterize 
structurally the simplest 
three-dimensional quadratic DSs that display
this complicated dynamical behavior.
To this end, we associate 
to each such DS a vector $(a,b)$, 
where $a$ is the total
number of monomials on the right-hand side
across all the equations, while $b$
is the total number of the quadratic monomials.
In this context, the $(7,2)$ Lorenz system
has been presented in $1963$~\cite{Lorenz},
containing $7$ monomials of which only $2$ are quadratic, 
and in $1976$ the $(7,1)$ R\"ossler system~\cite{Rossler},
containing only $1$ quadratic.
Subsequently, even simpler chaotic systems 
have been discovered: 
five $(5,2)$ and 
fourteen $(6,1)$ DSs in $1994$~\cite{Sprott},
and a $(5,1)$ DS in 1997~\cite{Sprott_51}.
To this date, no simpler 
three-dimensional quadratic DSs
with chaos have been reported;
instead, the literature has focused more on 
generally more complicated
chaotic DSs with various additional properties, 
such as those with no equilibria~\cite{Sprott_0_eq}, 
with a unique and stable equilibrium~\cite{Sprott_1_eq} 
and with a line of equilibria~\cite{Sprott_Line}. 

However, none of these simple chaotic DSs
are also \emph{chemical dynamical systems} (CDSs)~\cite{QCM}
- a special subset of polynomial DSs 
that can model the time-evolution
of the (non-negative) concentrations of chemical species
reacting according to a set of mass-action 
reactions jointly
called a \emph{chemical reaction network} 
(CRN)~\cite{Janos,Feinberg}. 
Put differently, while the reported chaotic 
DSs can be experimentally implemented 
with mechanical or electronic devices~\cite{Sprott_book_2}, 
they cannot be implemented with 
chemical reactions - a task achievable only for CDSs,
using e.g. DNA molecules~\cite{DNA}.
A fundamental property preventing this implementation, 
that distinguishes CDSs
from general polynomial DSs, 
is the absence of negative monomials
that do not include as a factor the corresponding 
variable; for example, $-1$
does not include a factor of $x$, making
$\mathrm{d} x/\mathrm{d} t = - 1$ a non-chemical DS. 
In this paper, we associate to every CRN, 
induced by a quadratic CDS, 
a vector $(c,d)$, where $c$ is the total
number of reactions, and $d$
the total number of the quadratic reactions 
- those with two molecules on the left-hand side,
such as $X + Y \xrightarrow[]{} Z$,
which correspond to the quadratic monomials.

Compared to general DSs, the number of 
reported three-dimensional quadratic CDSs 
with chaos is significantly lower. 
In this context, in 1980, reported are 
a $(12,9)$ CDS with a $(10,7)$ 
CRN~\cite{LV2}, and a $(9,6)$ CDS with a $(7,4)$ CRN~\cite{RosslerW}, 
known as the \emph{minimal Willamowski–R\"ossler} system.
Both of these CDSs take the \emph{Lotka-Volterra}
form~\cite{LV}: all the monomials in a given equation 
contain as a factor the corresponding variable.
More recently, a systematic method has been developed that,
under suitable robustness condition,
allows one to map 
every chaotic three-dimensional quadratic $(a,1)$ DS,
with any $a$, to a three-dimensional \emph{quadratic} 
CDS with chaos preserved~\cite{Chaos_1,QCM}.
Using this method, the following 
relatively simple systems
have been constructed in~\cite{Chaos_1}: 
two $(11,5)$ CDSs with $(9,4)$ CRNs, 
one resembling the  R\"ossler system, 
while the other has a chaotic attractor and a unique
and stable equilibrium, resembling system 
$\textrm{SE}_{17}$ from~\cite{Sprott_1_eq},
and a $(10,3)$ CDS with an $(8,3)$ CRN
resembling system P from~\cite{Sprott}.
To the best of the authors' knowledge, 
these are the simplest CDSs and CRNs with
chaos reported in the literature to this date.
For more general results, and 
references about higher-dimensional 
and higher-degree CDSs
with chaos, see~\cite{Chaos_1}.

A natural question arises: 
What are structurally the simplest CDSs with chaos?
In particular, is chaos possible in
three-dimensional quadratic CDSs 
with fewer than $9$ monomials, 
or fewer than $3$ quadratic monomials?
Are there chaotic CRNs that have fewer than 
$7$ reactions, or fewer than $3$ quadratic reactions?
In this paper, we address these questions. 
In particular, we first derive 
two elementary theoretical 
results. Included in the first result is that
if an $N$-dimensional polynomial DS has 
a compact invariant set in the positive orthant, 
then it has at least $2 N$ monomials, with at least
one negative and one positive monomial per equation, 
see Theorem~\ref{theorem:sign}. This result implies that 
three-dimensional CDSs with chaos in the positive orthant
have at least $6$ monomials.
The second result is that three-dimensional 
CDSs with chaos have a negative nonlinear monomial
with at least two distinct factors, 
see Theorem~\ref{theorem:non_linearity}.
We then perform an extensive computer search
for chaotic three-dimensional quadratic
CDSs with only $6$ or $7$ monomials,
using the derived theoretical results 
to reduce the computational intensity of the search.
We discover $20$  chemical systems,
named $\textrm{CS}_1$--$\textrm{CS}_{20}$
and presented in Tables~\ref{tab:one}--\ref{tab:two}, 
with numerically detected chaos displayed 
in Figure~\ref{fig:1}.
In particular, we show that chaos is 
detected in as simple as 
$(6,4)$ and $(7,2)$ CDSs, and 
$(4,3)$ and $(5,2)$ CRNs,
thus quantifying the 
structural complexity of chaotic CDSs. 
Furthermore, our results indicate that CDSs with chaos,
while structurally more complicated,
are nevertheless ubiquitous.

The paper is organized as follows.
In Section~\ref{sec:background}, 
we present some background theory;
further details can be found 
in Appendix~\ref{app:CRN}.
In Section~\ref{sec:theory}, we present
some theoretical results about chaotic CDSs,
which we then use in Section~\ref{sec:examples} 
to narrow down the computer search for
simple three-dimensional quadratic CDSs with chaos.
Finally, we close the paper with some open problems 
in Section~\ref{sec:discussion}.

\begin{table}[!htbp]
\small
\caption{\it{\emph{$20$ chemical systems with chaos.}} 
Each row shows one chemical system. The first column
contains the name of the system, and the structural 
complexity of its \emph{CDS} and \emph{CRN},
shown in the second and third columns,
respectively. 
The remaining columns respectively show
the non-negative equilibria, 
the Lyapunov exponents (\emph{LE}s), 
the Lyapunov dimension (\emph{LD}), 
and an initial condition
(\emph{IC}) attracted to the chaotic set.} 
\setlength{\tabcolsep}{6pt} % default is 6pt
\renewcommand{\arraystretch}{1.5} % increases row spacing by 25%
\begin{tabular}{c l l l r l l} 
\hline\hline
System & CDS & CRN & Equilibria & LEs & LD & IC \\ 
\hline & \\ [-2.0ex] 
$\textrm{CS}_1$ 
& $\frac{\mathrm{d} x}{\mathrm{d} t} = x^2 - 0.5 x y$
& $2 X \xrightarrow[]{1} 3 X + Y$,
& $\mathrm{nh}_{1,0}(0,0,0)$
& $0.1597$
& 
& $4$ \\
$(6,4)$
& $\frac{\mathrm{d} y}{\mathrm{d} t} = x^2 - y z$
& $X + Y \xrightarrow[]{0.5} Y$, 
$Y + Z \xrightarrow[]{1} Z$,
& 
& $0$
& $2.0228$
& $2$ \\
$(5,3)$ 
& $\frac{\mathrm{d} z}{\mathrm{d} t} = y - 0.9 z$
& $Y \xrightarrow[]{1} Y + Z$, $Z \xrightarrow[]{0.9} \varnothing$ 
&
& $-6.9947$
& 
& $3$ \\ [1ex]
\hline
$\textrm{CS}_2$ 
& $\frac{\mathrm{d} x}{\mathrm{d} t} = x y - 0.4 x z$
& $X + Y \xrightarrow[]{1} 2 X + Y + Z$,
& $\mathrm{nh}_{2,0}(0,0,0)$
& $0.2923$
& 
& $7$ \\
$(6,4)$ 
& $\frac{\mathrm{d} y}{\mathrm{d} t} = -y + x^2$
& $X + Z \xrightarrow[]{0.4} Z$, $Y \xrightarrow[]{1} \varnothing$,
& $\textrm{sf}_{1,2}(5,25,62.5)$
& $0$
& $2.0888$
& $25$ \\
$(5,3)$
& $\frac{\mathrm{d} z}{\mathrm{d} t} = -2 z + x y$
& $2 X \xrightarrow[]{1} 2 X + Y$, $Z \xrightarrow[]{2} \varnothing$ 
&
& $-3.2923$
& 
& $57$ \\ [1ex]
\hline
$\textrm{CS}_3$ 
& $\frac{\mathrm{d} x}{\mathrm{d} t} = y^2 - x z$
& $2 Y \xrightarrow[]{1} X + 2 Y$,
& $\mathrm{nh}_{2,0}(0,0,0)$
& $0.0405$
& 
& $3$ \\
$(6,4)$ 
& $\frac{\mathrm{d} y}{\mathrm{d} t} = -y + 0.04 x^2$
& $X + Z \xrightarrow[]{1} 2 Z$, $Y \xrightarrow[]{1} \varnothing$,
& $\textrm{sf}_{1,2}(5.4,1.17,0.25)$
& $0$
& $2.0260$
& $2$ \\
$(5,3)$ 
& $\frac{\mathrm{d} z}{\mathrm{d} t} = - 5.4 z + x z$
& $2 X \xrightarrow[]{0.04} 2 X + Y$, 
$Z \xrightarrow[]{5.4} \varnothing$
&
& $-1.5549$
& 
& $1$ \\ [1ex]
% \Xhline{3\arrayrulewidth}
\hline
$\textrm{CS}_4$ 
& $\frac{\mathrm{d} x}{\mathrm{d} t} = -0.01 x^2 + y z$
& $2 X \xrightarrow[]{0.01} X$, 
$Y + Z \xrightarrow[]{1} X + Z$,
& $\textrm{nh}_{1,0}(0,0,0)$
& $0.3506$
& 
& $200$ \\
$(6,5)$ 
& $\frac{\mathrm{d} y}{\mathrm{d} t} = 0.27 x y - y z$
& $X + Y \xrightarrow[]{0.27} X + 2 Y$,
& $\mathrm{sf}_{1,2}(196.83,7.29,53.14)$
& $0$
& $2.0738$
& $6$ \\
$(5,4)$ 
& $\frac{\mathrm{d} z}{\mathrm{d} t} = -z + y^2$
& $Z \xrightarrow[]{1} \varnothing$, 
$2 Y \xrightarrow[]{1} 2 Y + Z$
&
& $-4.7500$
& 
& $55$ \\ [1ex]
\hline
$\textrm{CS}_5$ 
& $\frac{\mathrm{d} x}{\mathrm{d} t} = -0.1 x^2 + x y$
& $2 X \xrightarrow[]{0.1} X$, 
$X + Y \xrightarrow[]{1} 2 X + 2 Y$,
& $\textrm{nh}_{1,0}(0,0,0)$
& $0.1538$
& 
& $17$ \\
$(6,5)$ 
& $\frac{\mathrm{d} y}{\mathrm{d} t} = x y - y z$
& $Y + Z \xrightarrow[]{1} Z$,
$Z \xrightarrow[]{1} \varnothing$,
& $\mathrm{sf}_{1,2}(19.61,1.96,19.61)$
& $0$
& $2.0599$
& $3$ \\
$(5,4)$ 
& $\frac{\mathrm{d} z}{\mathrm{d} t} = -z + 5.1 y^2$
& $2 Y \xrightarrow[]{5.1} 2 Y + Z$
&
& $-2.5681$
& 
& $16$ \\ [1ex]
\hline
$\textrm{CS}_6$ 
& $\frac{\mathrm{d} x}{\mathrm{d} t} = 2 y^2 - x y$
& $2 Y \xrightarrow[]{2} X + 2 Y$,
& $\textrm{nh}_{1,0}(0,0,0)$
& $0.0424$
& 
& $10$ \\
$(6,5)$ 
& $\frac{\mathrm{d} y}{\mathrm{d} t} = x y - 0.5 y z$
& $X + Y \xrightarrow[]{1} 2 Y + Z$,
& $\textrm{nh}_{1,1}(x > 0,0,0)$ 
& $0$
& $2.0157$
& $1$ \\
$(4,3)$
& $\frac{\mathrm{d} z}{\mathrm{d} t} = - z + x y$
& $Y + Z \xrightarrow[]{0.5} Z$, 
$Z \xrightarrow[]{1} \varnothing$
& $\mathrm{sf}_{1,2}(4,2,8)$
& $-2.7004$
& 
& $10$ \\ [1ex]
\hline
$\textrm{CS}_7$ 
& $\frac{\mathrm{d} x}{\mathrm{d} t} = -0.1 x z + y z$
& $X + Z \xrightarrow[]{0.1} Z$, 
$Y + Z \xrightarrow[]{1} X + Z$,
& $\textrm{nh}_{1,0}(0,0,0)$ 
& $0.1916$
& 
& $15$ \\
$(6,5)$ 
& $\frac{\mathrm{d} y}{\mathrm{d} t} = x y - y z$
& $X + Y \xrightarrow[]{1} X + 2 Y$,
& $\textrm{nh}_{1,1}(x > 0,0,0)$
& $0$
& $2.0753$
& $2$ \\
$(5,4)$ 
& $\frac{\mathrm{d} z}{\mathrm{d} t} = - z + 5 y^2$
& $Z \xrightarrow[]{1} \varnothing$, 
$2 Y \xrightarrow[]{5} 2 Y + Z$
& $\mathrm{sf}_{1,2}(20,2,20)$
& $-2.5448$
& 
& $14$ \\ [1ex]
\hline
$\textrm{CS}_8$ 
& $\frac{\mathrm{d} x}{\mathrm{d} t} = x y - 0.6 x z$
& $X + Y \xrightarrow[]{1} 2 X + 2 Y$,
& $\textrm{nh}_{1,0}(0,0,0)$ 
& $0.0662$
& 
& $5$ \\
$(6,5)$ 
& $\frac{\mathrm{d} y}{\mathrm{d} t} = x y - 4 y z$
& $X + Z \xrightarrow[]{0.6} Z$,
$Y + Z \xrightarrow[]{4} Z$,
& $\textrm{nh}_{1,1}(x > 0,0,0)$
& $0$
& $2.0621$
& $2$ \\
$(5,4)$ 
& $\frac{\mathrm{d} z}{\mathrm{d} t} = -z + y^2$
& $Z \xrightarrow[]{1} \varnothing$, 
$2 Y \xrightarrow[]{1} 2 Y + Z$
& $\mathrm{sf}_{1,2}(11.11,1.67,2.78)$
& $-1.0662$
& 
& $1$ \\ [1ex]
\hline\hline
\end{tabular}
\label{tab:one} 
\end{table}

\begin{table}[!htbp]
\vskip -2.7cm
\small
\caption{Table~\ref{tab:one} continued.} 
\setlength{\tabcolsep}{6.0pt} % default is 6pt
\renewcommand{\arraystretch}{1.5} % increases row spacing by 25%
\begin{tabular}{c l l l r l l} 
\hline\hline
System & CDS & CRN & Equilibria & LEs & LD & IC \\ 
\hline & \\ [-2.0ex] 
$\textrm{CS}_9$ 
& $\frac{\mathrm{d} x}{\mathrm{d} t} = -4 x + x y$
& $X \xrightarrow[]{4} Z$,
$Y + Z \xrightarrow[]{0.2} Z$,
& $\textrm{s}_{2,1}(0,0,0)$ 
& $0.1659$
& 
& $2$ \\
$(7,2)$ 
& $\frac{\mathrm{d} y}{\mathrm{d} t} = 6 y - 0.2 y z$
& $X + Y \xrightarrow[]{1} 2 X + Y$,
& $\mathrm{sf}_{2,1}(0,10,30)$
& $0$
& $2.0766$
& $1$ \\
$(5,2)$ 
& $\frac{\mathrm{d} z}{\mathrm{d} t} = 4 x + 6 y - 2 z$
& $Y \xrightarrow[]{6} 2 Y + Z$, 
$Z \xrightarrow[]{2} \varnothing$
& $\mathrm{sf}_{1,2}(9,4,30)$
& $-2.1659$
& 
& $22$ \\ [1ex]
% \Xhline{3\arrayrulewidth}
\hline
$\textrm{CS}_{10}$ 
& $\frac{\mathrm{d} x}{\mathrm{d} t} = -x + z + y^2$
& $X \xrightarrow[]{1} \varnothing$,
$Z \xrightarrow[]{1} X$, 
& $\textrm{s}_{2,1}(0,0,0)$ 
& $0.0930$
& 
& $3$ \\
$(7,3)$ 
& $\frac{\mathrm{d} y}{\mathrm{d} t} = 3 y - x y$
& $2 Y \xrightarrow[]{1} X + 2 Y$,
$Y \xrightarrow[]{3} 2 Y$,
& $\mathrm{sf}_{2,1}(3,\sqrt{3},0)$ 
& $0$
& $2.0851$
& $3$ \\
$(6,3)$ 
& $\frac{\mathrm{d} z}{\mathrm{d} t} = -z + 2 y z$
& $X + Y \xrightarrow[]{1} X$, 
$Y + Z \xrightarrow[]{2} Y + 2 Z$
& $\mathrm{sf}_{1,2}(3,0.5,2.75)$
& $-1.0930$
& 
& $1$ \\ [1ex]
\hline
$\textrm{CS}_{11}$ 
& $\frac{\mathrm{d} x}{\mathrm{d} t} = -x + 8 z + y^2$
& $X \xrightarrow[]{1} \varnothing$,
$Z \xrightarrow[]{8} X + 2 Z$, 
& $\textrm{s}_{2,1}(0,0,0)$ 
& $0.0930$
& 
& $10$ \\
$(7,3)$ 
& $\frac{\mathrm{d} y}{\mathrm{d} t} = - y + y z$
& $2 Y \xrightarrow[]{1} X + 2 Y$,
$X + Z \xrightarrow[]{0.3} X$,
& $\mathrm{sf}_{2,1}(26.67,0,3.33)$ 
& $0$
& $2.0851$
& $1$ \\
$(6,3)$ 
& $\frac{\mathrm{d} z}{\mathrm{d} t} = 8 z - 0.3 x z$
& $Y \xrightarrow[]{1} \varnothing$, 
$Y + Z \xrightarrow[]{1} 2 Y + Z$
& $\mathrm{sf}_{1,2}(26.67,4.32,1)$
& $-1.0930$
& 
& $1$ \\ [1ex]
\hline
$\textrm{CS}_{12}$ 
& $\frac{\mathrm{d} x}{\mathrm{d} t} = 3 z + x^2 - x y$
& $Z \xrightarrow[]{3} X$,
$2 X\xrightarrow[]{1} 3 X$,
& $\textrm{nh}_{2,0}(0,0,0)$ 
& $0.1185$
& 
& $5$ \\
$(7,3)$ 
& $\frac{\mathrm{d} y}{\mathrm{d} t} = -4 y + x y$
& $X + Y \xrightarrow[]{1} 2 Y$, 
$Y \xrightarrow[]{4} \varnothing$,
& $\mathrm{sf}_{1,2}(4,5.33,1.78)$ 
& $0$
& $2.0234$
& $5$ \\
$(5,2)$ 
& $\frac{\mathrm{d} z}{\mathrm{d} t} = y - 3 z$
& $Y \xrightarrow[]{1} Y + Z$
&
& $-5.0597$
& 
& $2$ \\ [1ex]
% \Xhline{3\arrayrulewidth}
\hline
$\textrm{CS}_{13}$ 
& $\frac{\mathrm{d} x}{\mathrm{d} t} = y + x^2 - x z$
& $Y \xrightarrow[]{1} X + Y$,
$2 X \xrightarrow[]{1} 3 X$,
& $\textrm{nh}_{2,0}(0,0,0)$ 
& $0.6042$
& 
& $5$ \\
$(7,4)$ 
& $\frac{\mathrm{d} y}{\mathrm{d} t} = -4 y + x z$
& $X + Z \xrightarrow[]{1} Y + 2 Z$,
& $\mathrm{sf}_{1,2}(9,27,12)$ 
& $0$
& $2.0352$
& $38$ \\
$(5,2)$ 
& $\frac{\mathrm{d} z}{\mathrm{d} t} =  -9 z + x z$
& $Y \xrightarrow[]{4} \varnothing$,
$Z \xrightarrow[]{9} \varnothing$
& 
& $-17.1720$
& 
& $9$ \\ [1ex]
\hline
$\textrm{CS}_{14}$ 
& $\frac{\mathrm{d} x}{\mathrm{d} t} = 0.2 y + x^2 - x z$
& $Y \xrightarrow[]{0.2} X + Y$,
$2 X \xrightarrow[]{1} 3 X$,
& $\textrm{nh}_{1,0}(0,0,0)$ 
& $0.0683$
& 
& $1$ \\
$(7,4)$ 
& $\frac{\mathrm{d} y}{\mathrm{d} t} = z - x y$
& $X + Z \xrightarrow[]{1} 2 Z$, 
$Z \xrightarrow[]{1} Y$
& $\mathrm{sf}_{1,2}(1,1.25,1.25)$
& $0$
& $2.0790$
& $2$ \\
$(5,3)$ 
& $\frac{\mathrm{d} z}{\mathrm{d} t} = -z + x z$
& $X + Y \xrightarrow[]{1} X$
& 
& $-0.8639$
& 
& $1$ \\ [1ex]
\hline
$\textrm{CS}_{15}$ 
& $\frac{\mathrm{d} x}{\mathrm{d} t} = x + y^2 - x z$
& $X \xrightarrow[]{1} 2 X$,
$2 Y \xrightarrow[]{1} X + 2 Y$,
& $\textrm{n}_{0,3}(0,0,0)$ 
& $0.0212$
& 
& $0$ \\
$(7,4)$ 
& $\frac{\mathrm{d} y}{\mathrm{d} t} = 2 y - x y$
& $X + Z\xrightarrow[]{1} Z$, 
$Y \xrightarrow[]{2} 2 Y$,
$Z \xrightarrow[]{0.2} 2 Z$,
& $\mathrm{sf}_{1,2}(2,0.4,1.08)$
& $0$
& $2.0158$
& $1$ \\
$(7,4)$ 
& $\frac{\mathrm{d} z}{\mathrm{d} t} = 0.2 z - 0.5 y z$
& $X + Y \xrightarrow[]{1} X$, 
$Y + Z \xrightarrow[]{0.5} Y$
& 
& $-1.3397$
& 
& $4$ \\ [1ex]
\hline
$\textrm{CS}_{16}$ 
& $\frac{\mathrm{d} x}{\mathrm{d} t} = 0.28 - x y + y z$
& $\varnothing \xrightarrow[]{0.28} X$,
$X + Y \xrightarrow[]{1} 2 Y$,
& $\textrm{sf}_{2,1}(1,0.28,0)$ 
& $0.0293$
& 
& $3$ \\
$(7,4)$ 
& $\frac{\mathrm{d} y}{\mathrm{d} t} = -y + x y$
& $Y + Z \xrightarrow[]{1} X + Y$,
& $\mathrm{sf}_{1,2}(1,1,0.72)$ 
& $0$
& $2.0285$
& $1$ \\
$(5,2)$ 
& $\frac{\mathrm{d} z}{\mathrm{d} t} = z - y z$
& $Y \xrightarrow[]{1} \varnothing$, 
$Z \xrightarrow[]{1} 2 Z$
& 
& $-1.0293$
& 
& $5$ \\ [1ex]
\hline
$\textrm{CS}_{17}$ 
& $\frac{\mathrm{d} x}{\mathrm{d} t} = 3 x^2 + 0.04 z^2 - x y$
& $2 X \xrightarrow[]{3} 3 X + Y$,
& $\textrm{nh}_{2,0}(0,0,0)$ 
& $0.1687$
& 
& $1$ \\
$(7,4)$ 
& $\frac{\mathrm{d} y}{\mathrm{d} t} = -y + 3 x^2$
& $2 Z \xrightarrow[]{0.04} X + 2 Z$,
$Y \xrightarrow[]{1} Z$,
& $\mathrm{sf}_{1,2}(1.16,4.05,4.05)$ 
& $0$
& $2.0721$
& $3$ \\
$(5,3)$ 
& $\frac{\mathrm{d} z}{\mathrm{d} t} = y - z$
& $X + Y \xrightarrow[]{1} Y$, 
$Z \xrightarrow[]{1} \varnothing$
& $\mathrm{s}_{2,1}(7.17,154.28,154.28)$
& $-2.3410$
& 
& $4$ \\ [1ex]
\hline
$\textrm{CS}_{18}$ 
& $\frac{\mathrm{d} x}{\mathrm{d} t} = z + x^2 - 0.35 x y$
& $Z \xrightarrow[]{1} X$,
$X + Y \xrightarrow[]{0.35} Y$,
& $\textrm{nh}_{2,0}(0,0,0)$ 
& $0.0524$
& 
& $1$ \\
$(7,4)$ 
& $\frac{\mathrm{d} y}{\mathrm{d} t} = -0.4 y + x^2$
& $Y \xrightarrow[]{0.4} \varnothing$, 
$2 X \xrightarrow[]{1} 3 X + Y$
& $\mathrm{sf}_{1,2}(1.44,5.17,0.54)$
& $0$
& $2.0803$
& $5$ \\
$(5,3)$ 
& $\frac{\mathrm{d} z}{\mathrm{d} t} = -z + 0.02 y^2$
& $2 Y \xrightarrow[]{0.02} 2 Y + Z$
& $\mathrm{s}_{2,1}(5.56,77.33,119.59)$
& $-0.6533$
& 
& $1$ \\ [1ex]
\hline
$\textrm{CS}_{19}$ 
& $\frac{\mathrm{d} x}{\mathrm{d} t} = z + 0.3 x^2 - x y$
& $Z \xrightarrow[]{1} X$,
$2 X \xrightarrow[]{0.3} 3 X$,
& $\textrm{nh}_{2,0}(0,0,0)$ 
& $0.1082$
& 
& $2$ \\
$(7,4)$ 
& $\frac{\mathrm{d} y}{\mathrm{d} t} = -3 y + x y$
& $Y \xrightarrow[]{3} \varnothing$, 
$X + Y \xrightarrow[]{1} 2 Y$, 
& $\mathrm{sf}_{1,2}(3,1,0.3)$
& $0$
& $2.1319$
& $4$ \\
$(5,3)$ 
& $\frac{\mathrm{d} z}{\mathrm{d} t} = -z + 0.3 y^2$
& $2 Y \xrightarrow[]{0.3} 2 Y + Z$
& $\mathrm{sf}_{2,1}(3,9,24.3)$
& $-0.8204$
& 
& $2$ \\ [1ex]
\hline
$\textrm{CS}_{20}$ 
& $\frac{\mathrm{d} x}{\mathrm{d} t} = -x + 0.4 y^2 + z^2$
& $X \xrightarrow[]{1} \varnothing$,
$2 Y \xrightarrow[]{0.4} X + 2 Y$,
& $\textrm{s}_{2,1}(0,0,0)$ 
& $0.1147$
& 
& $6$ \\
$(7,4)$ 
& $\frac{\mathrm{d} y}{\mathrm{d} t} = 4 y - x y$
& $2 Z \xrightarrow[]{1} X + 2 Z$, 
$Y \xrightarrow[]{4} 2 Y$,
$Z \xrightarrow[]{1} \varnothing$,
& $\mathrm{sf}_{2,1}(4,3.16,0)$
& $0$
& $2.1029$
& $1$ \\
$(7,4)$ 
& $\frac{\mathrm{d} z}{\mathrm{d} t} = -z + y z$
& $X + Y \xrightarrow[]{1} X$,
$Y + Z \xrightarrow[]{1} Y + 2 Z$
& $\mathrm{sf}_{1,2}(4,1,1.90)$
& $-1.1147$
& 
& $3$ \\ [1ex]
\hline\hline
\end{tabular}
\label{tab:two} 
\end{table}

\begin{figure}[!htbp]
\vskip 0.0cm
\leftline{
\hskip 1.7cm $\textrm{CS}_1$ 
\hskip 3.5cm $\textrm{CS}_2$ 
\hskip 3.5cm $\textrm{CS}_3$
\hskip 3.5cm $\textrm{CS}_4$}
\centerline{
\hskip 0.0cm
\includegraphics[width=0.25\columnwidth]{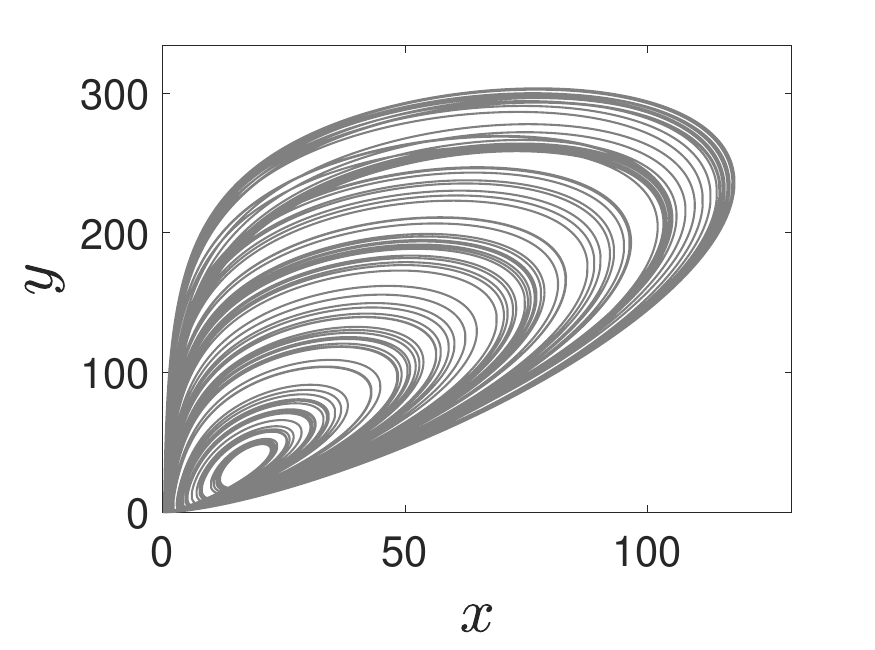}
\includegraphics[width=0.25\columnwidth]{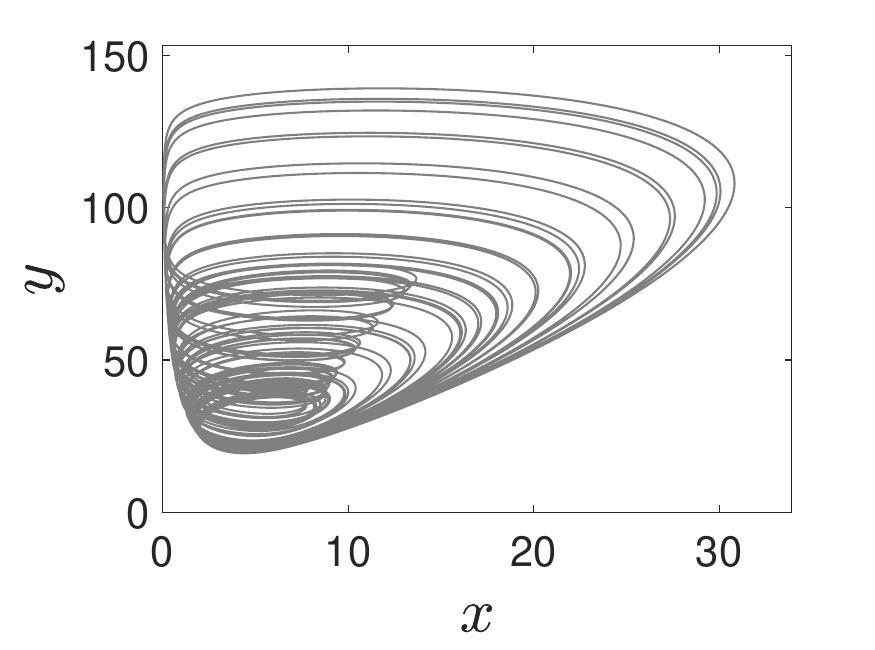}
\includegraphics[width=0.25\columnwidth]{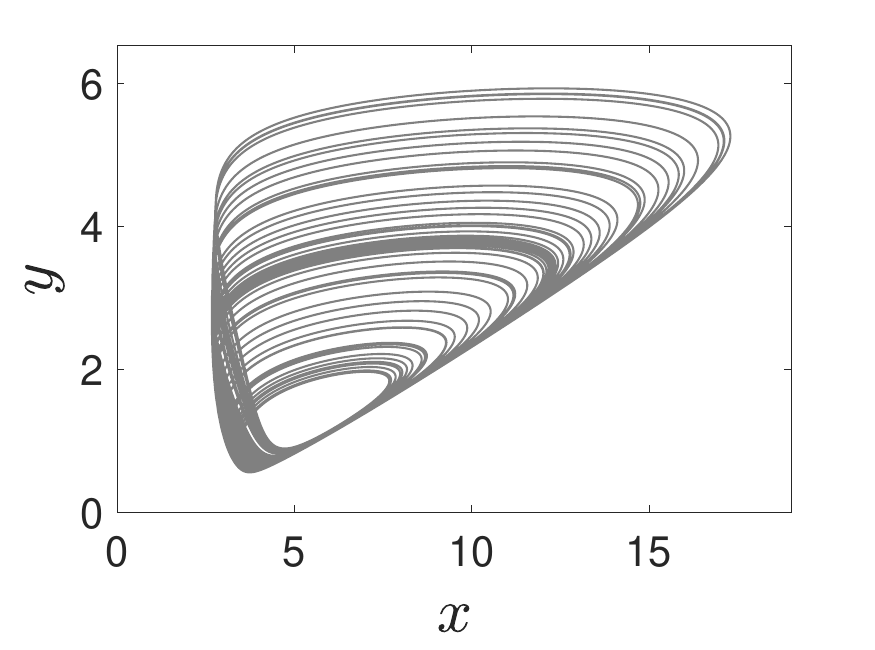}
\includegraphics[width=0.25\columnwidth]{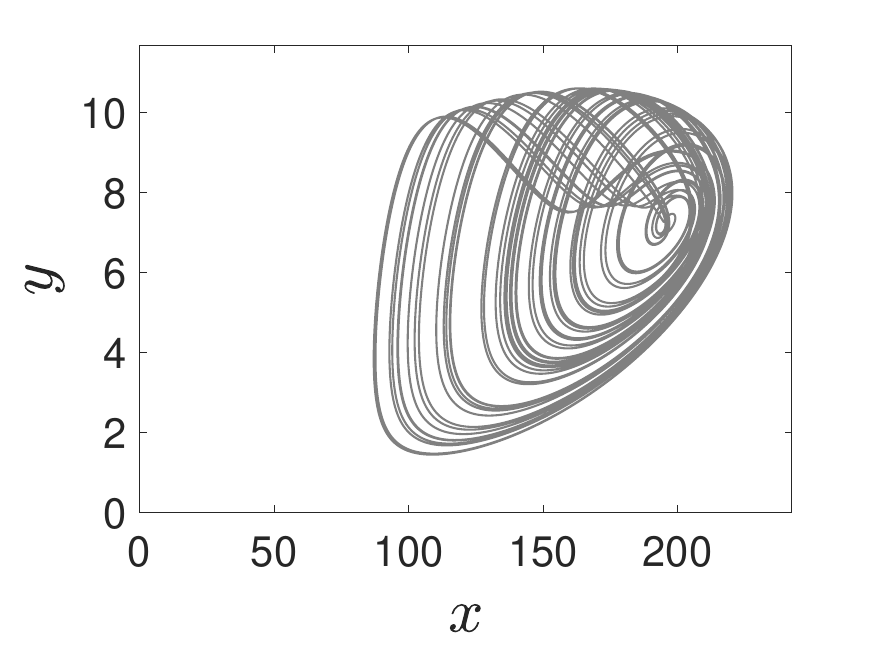}
}
\leftline{
\hskip 1.7cm $\textrm{CS}_5$ 
\hskip 3.5cm $\textrm{CS}_6$ 
\hskip 3.5cm $\textrm{CS}_7$
\hskip 3.5cm $\textrm{CS}_8$}
\centerline{
\hskip 0.0cm
\includegraphics[width=0.25\columnwidth]{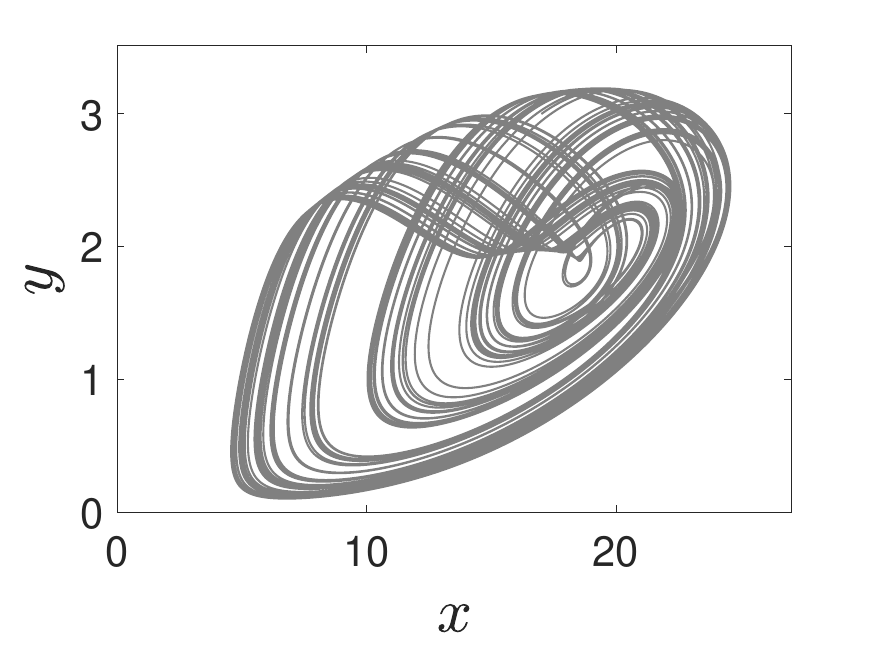}
\includegraphics[width=0.25\columnwidth]{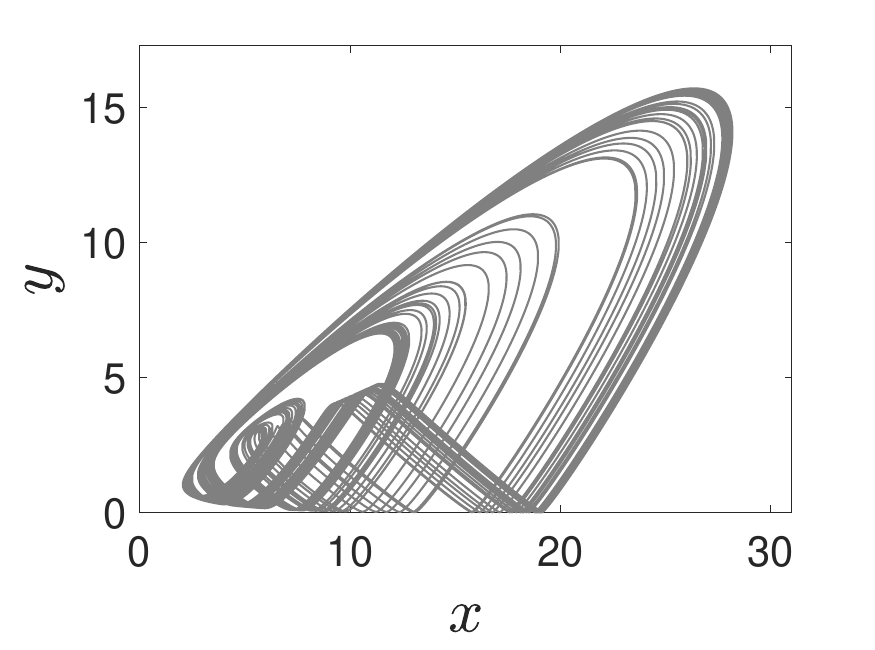}
\includegraphics[width=0.25\columnwidth]{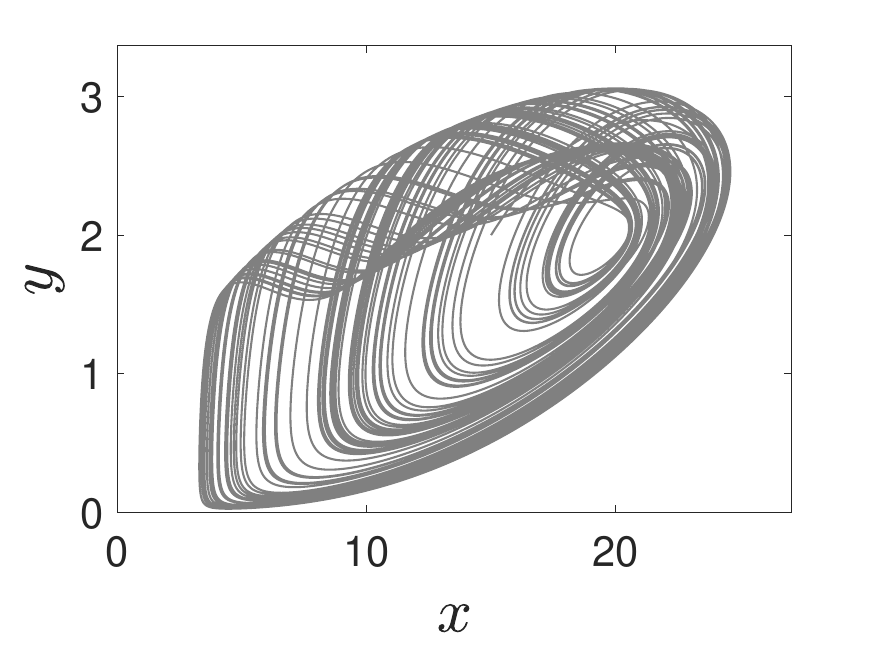}
\includegraphics[width=0.25\columnwidth]{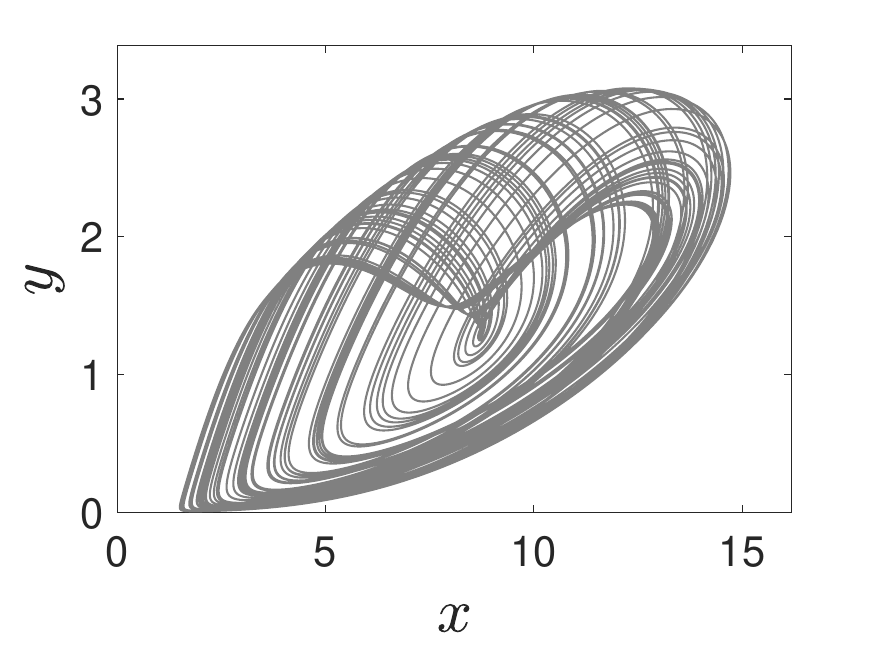}
}
\leftline{
\hskip 1.7cm $\textrm{CS}_9$ 
\hskip 3.5cm $\textrm{CS}_{10}$ 
\hskip 3.3cm $\textrm{CS}_{11}$
\hskip 3.3cm $\textrm{CS}_{12}$}
\centerline{
\hskip 0.0cm
\includegraphics[width=0.25\columnwidth]{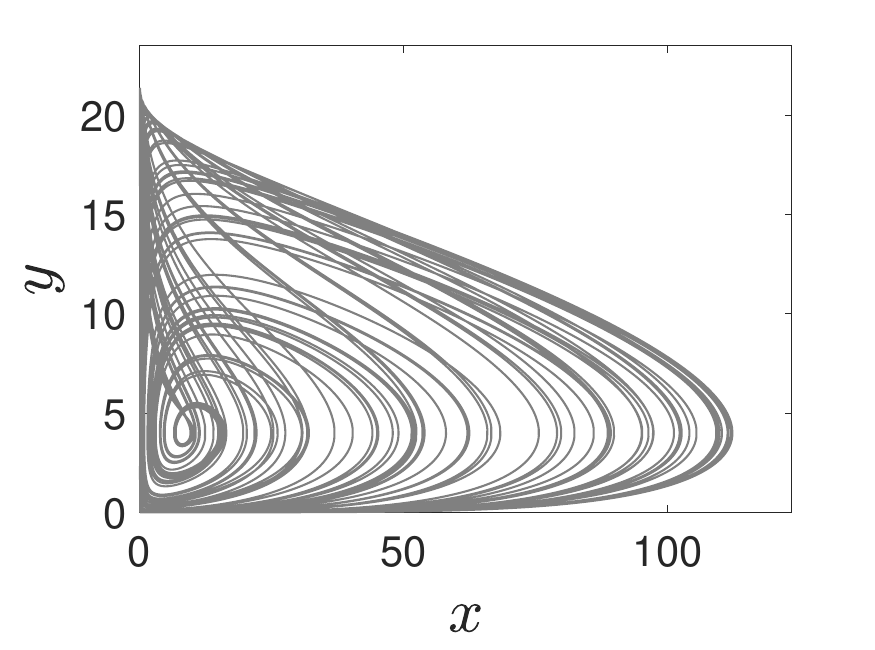}
\includegraphics[width=0.25\columnwidth]{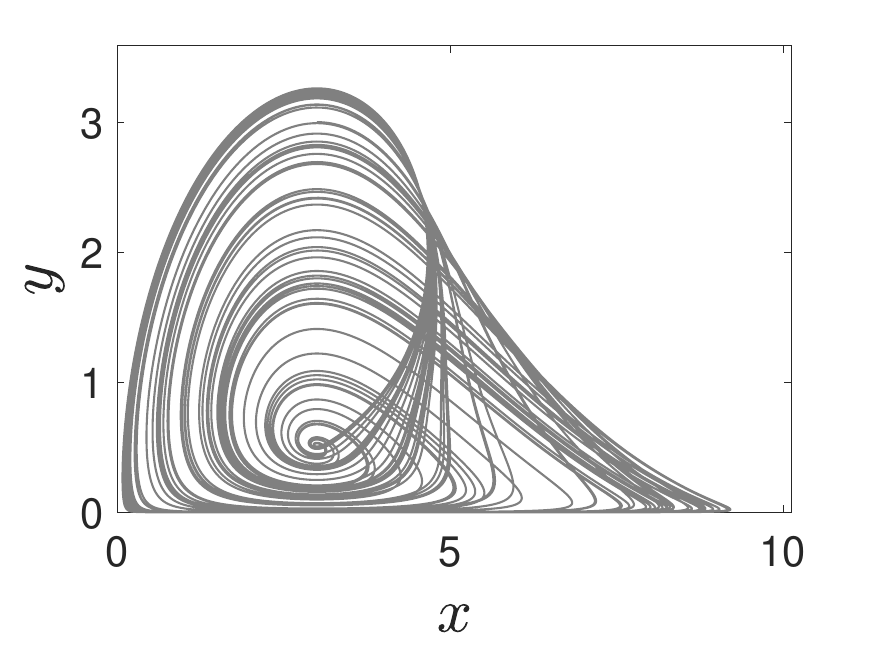}
\includegraphics[width=0.25\columnwidth]{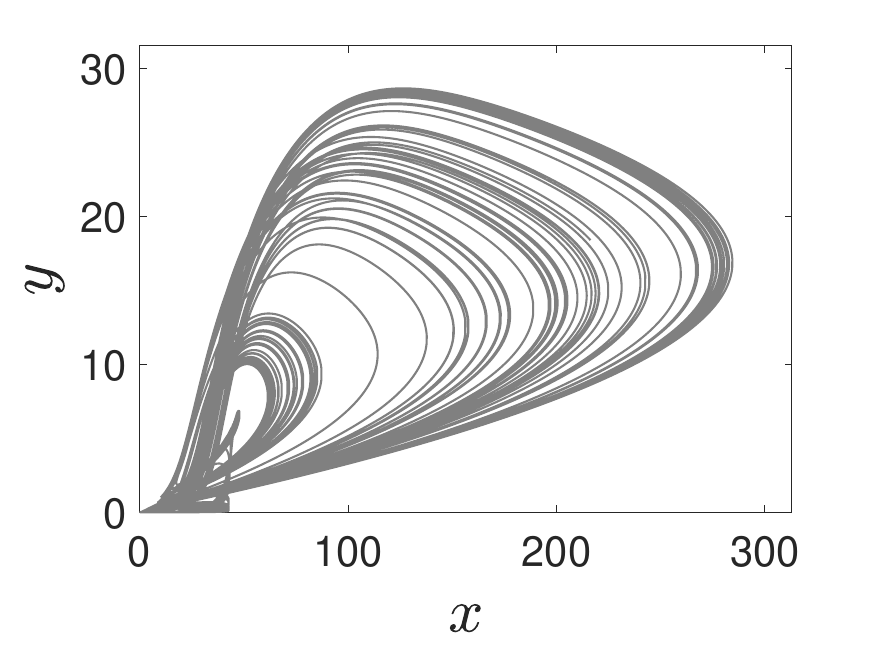}
\includegraphics[width=0.25\columnwidth]{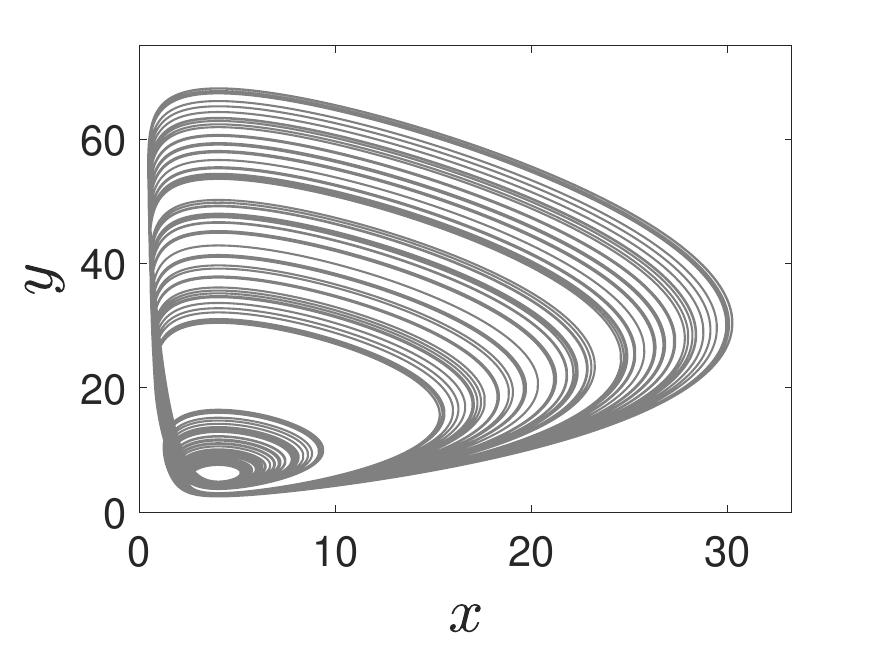}
}
\leftline{
\hskip 1.7cm $\textrm{CS}_{13}$ 
\hskip 3.3cm $\textrm{CS}_{14}$ 
\hskip 3.3cm $\textrm{CS}_{15}$
\hskip 3.3cm $\textrm{CS}_{16}$}
\centerline{
\hskip 0.0cm
\includegraphics[width=0.25\columnwidth]{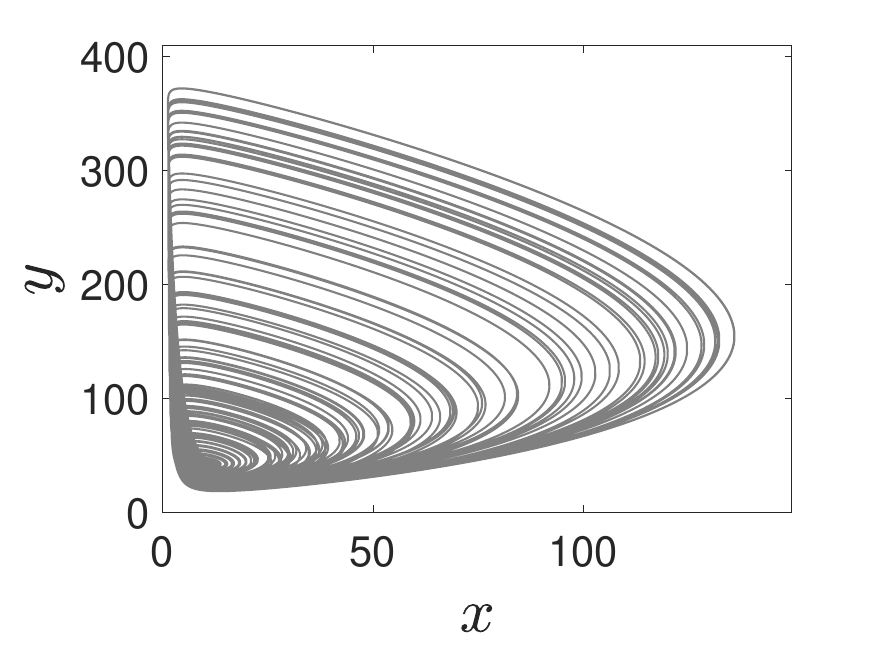}
\includegraphics[width=0.25\columnwidth]{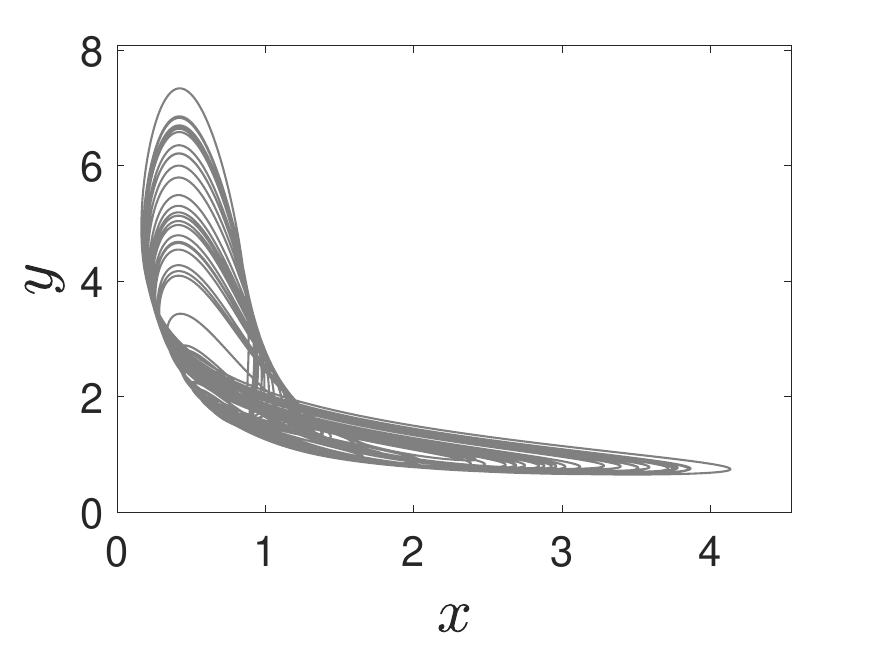}
\includegraphics[width=0.25\columnwidth]{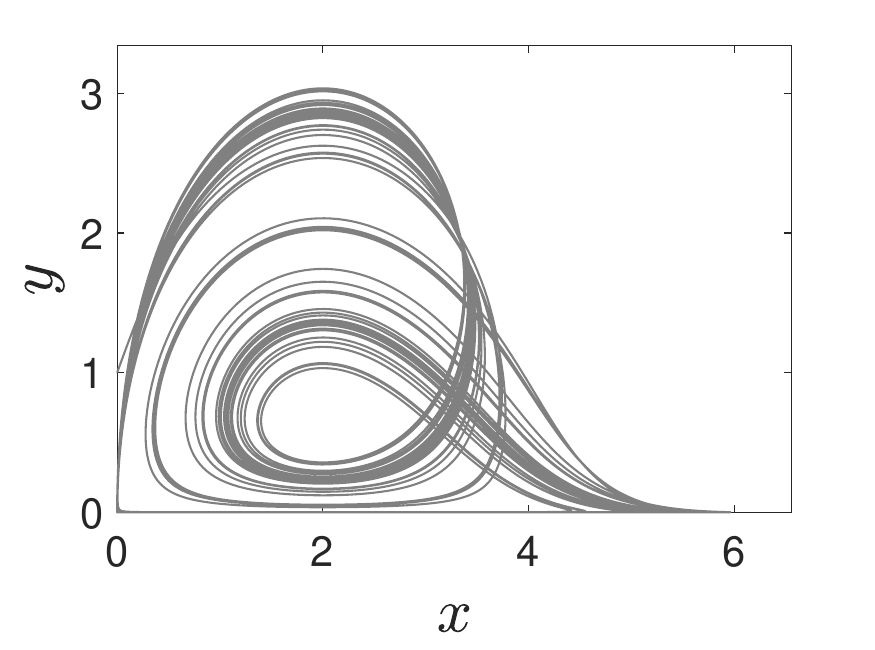}
\includegraphics[width=0.25\columnwidth]{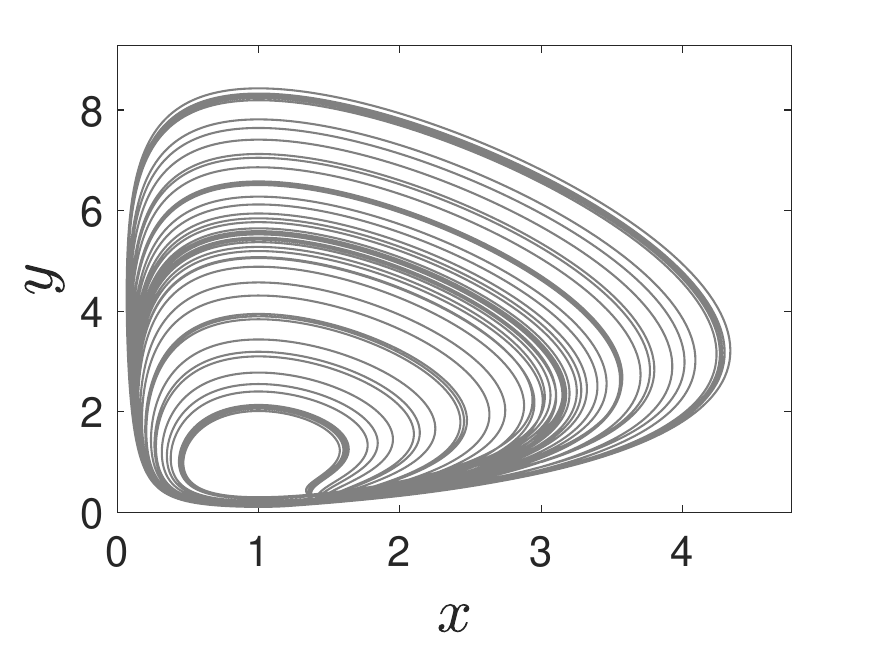}
}
\leftline{
\hskip 1.7cm $\textrm{CS}_{17}$ 
\hskip 3.3cm $\textrm{CS}_{18}$ 
\hskip 3.3cm $\textrm{CS}_{19}$
\hskip 3.3cm $\textrm{CS}_{20}$}
\centerline{
\hskip 0.0cm
\includegraphics[width=0.25\columnwidth]{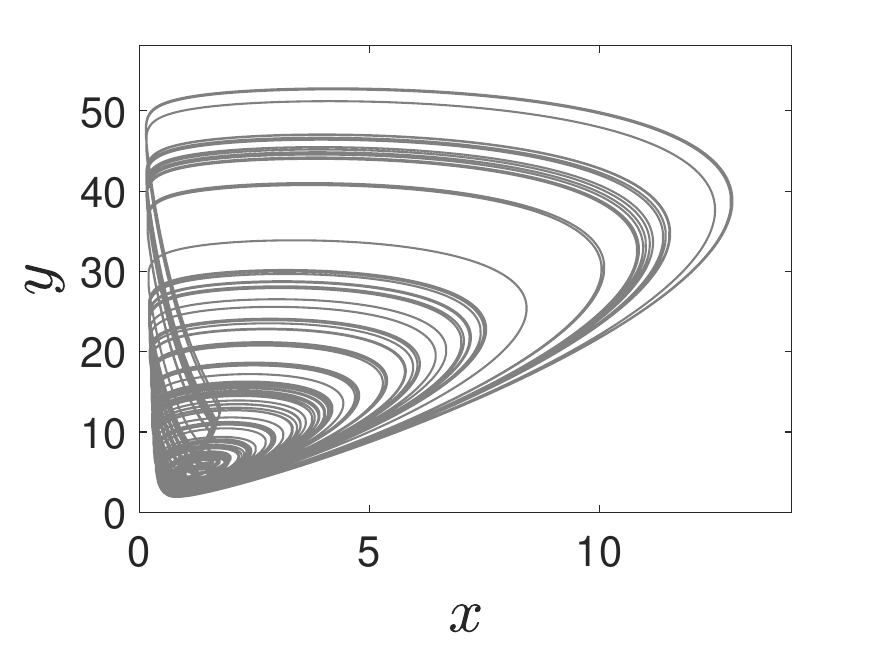}
\includegraphics[width=0.25\columnwidth]{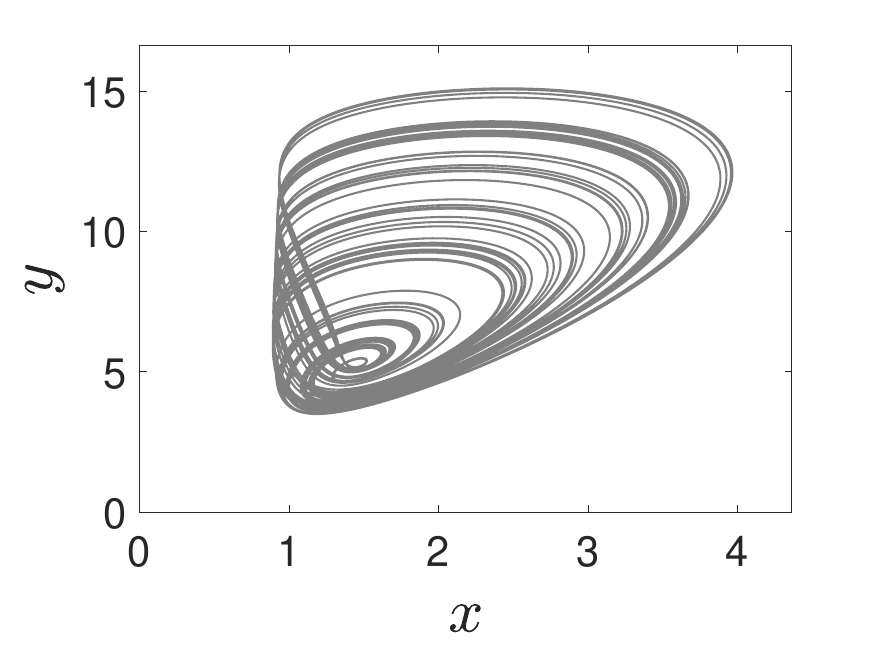}
\includegraphics[width=0.25\columnwidth]{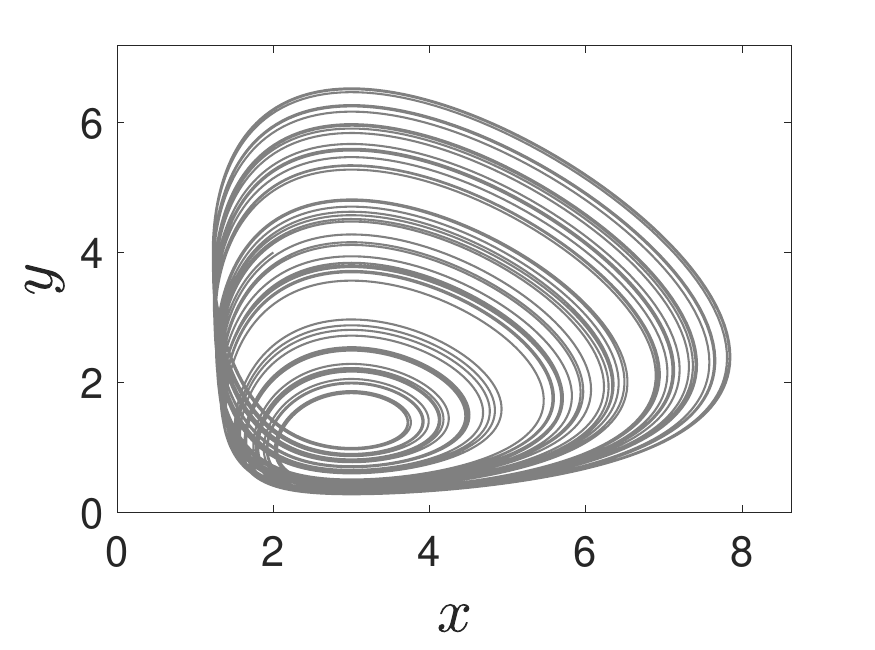}
\includegraphics[width=0.25\columnwidth]{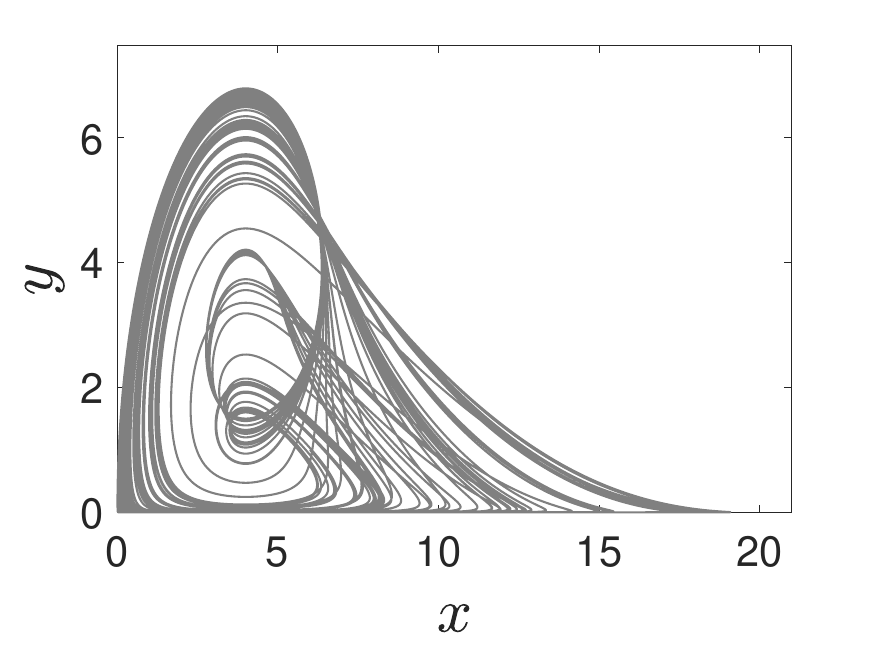}
}
\vskip -0.2cm
\caption{\it{\emph{Chemical systems with chaos.}} Numerically 
detected chaotic attractors in the $(x,y)$-space
for the $20$ chemical systems presented in 
\emph{Tables}~$\ref{tab:one}$--$\ref{tab:two}$.} 
\label{fig:1}
\end{figure}

\section{Background}
\label{sec:background}
In this section, we introduce some background theory;
for more details, see~\cite{Chaos_1}. 

\textbf{Notation}. The sets of all real, non-negative and 
positive real numbers are respectively denoted by $\mathbb{R}$, 
$\mathbb{R}_{\ge}$ and $\mathbb{R}_{>}$.
Column vectors are denoted by 
$\mathbf{x} = (x_1, x_2, \ldots, x_N)^{\top} 
\in \mathbb{R}^{N}$, where $\cdot^{\top}$ 
is the transpose operator.

\textbf{Dynamical systems}. 
Consider a system of autonomous
ordinary-differential equations
\begin{align}
\frac{\mathrm{d} x_i}{\mathrm{d} t} & = f_i(\mathbf{x}; n),
\; \; \; i = 1,2,\ldots, N,
\label{eq:DS}
\end{align}
where $t \in \mathbb{R}$ is time,
$\mathbf{x} = (x_1,x_2,\ldots,x_N)^{\top} \in \mathbb{R}^N$
and $f_i(\cdot; n) : \mathbb{R}^N \to \mathbb{R}$
a polynomial of degree at most $n$ that is not identically zero.
We say that~(\ref{eq:DS}) is an $N$-dimensional 
$n$-degree polynomial \emph{dynamical system} (DS).
We assume all the variables are
dimensionless; furthermore, without
a loss of generality, we assume that all
 monomials in~$f_i(\mathbf{x})
= f_i(\mathbf{x}; n)$ are distinct.
Central to this paper is a special subset of these DSs.

\begin{definition} $($\textbf{Chemical dynamical system}$)$ 
\label{def:CDS}
Assume that $m(\mathbf{x})$ is a monomial in $f_i(\mathbf{x})$
such that $m(\mathbf{x}) \ge 0$ 
when $x_i = 0$ and $x_{j} \ge 0$ for all $j \ne i$.
Then $m(\mathbf{x})$ is said to be a \emph{chemical} monomial.
If $f_i(\mathbf{x})$ contains only chemical monomials,
then $f_i(\mathbf{x})$ is \emph{chemical}.
If for every $i = 1,2,\ldots,N$ 
function $f_i(\mathbf{x})$ is chemical, 
then~$(\ref{eq:DS})$ 
is an $N$-dimensional $n$-degree (mass-action) 
\emph{chemical dynamical system (CDS)}.
\end{definition}

Let $\mathbf{x}(t;\mathbf{x}_0)$ be the solution
of a CDS with initial condition 
$\mathbf{x}(0;\mathbf{x}_0) = \mathbf{x}_0$.
It follows from Definition~\ref{def:CDS} that 
if the initial condition is non-negative, 
$\mathbf{x}_0 \in \mathbb{R}_{\ge}^N$, 
then the solution remains non-negative for all future times, 
$\mathbf{x}(t;\mathbf{x}_0) \in \mathbb{R}_{\ge}^N$
for every $t \ge 0$~\cite{Janos,Feinberg}. 
Furthermore, variables $x_1,x_2,\ldots,x_N$ from CDSs 
can be interpreted 
as (non-negative) concentrations of chemical species,
which react according to a set
of reactions jointly called a \emph{chemical reaction network} (CRN)~\cite{Janos,Feinberg}; 
see Appendix~\ref{app:CRN} for more details.
In this paper, we call the pair consisting of
a CDS and an induced CRN a \emph{chemical system}.

To characterize structural complexity of DSs and CRNs, 
we use the following definition.

\begin{definition} $($\textbf{Structural complexity}$)$ 
\label{def:CDS_complexity}
Let $M_d$ be the total number of monomials of degree $d$
contained in $f_1(\mathbf{x}; n), f_2(\mathbf{x}; n), 
\ldots, f_N(\mathbf{x}; n)$.
Then \emph{DS}~$(\ref{eq:DS})$ is said to be 
an $n$-degree $(M_0 + M_1 + \ldots + M_n,M_2,M_3,\ldots,M_n)$ \emph{DS}. If a \emph{CRN} induced by an $n$-degree \emph{CDS} 
contains exactly $R_d$ reactions of degree $d$,
then it is said to be an $n$-degree
$(R_0 + R_1 + \ldots + R_n,R_2,R_3,\ldots,R_n)$ \emph{CRN}.
\end{definition}

In this paper, we focus on DSs with $N = 3$,
and then let $x = x_1$, $y = x_2$ and $z = x_3$.

\begin{example} 
Consider the three-dimensional quadratic \emph{DS}
\begin{align}
\frac{\mathrm{d} x}{\mathrm{d} t}
& = f_1(y,z) =  2.7 y + z, \nonumber \\
\frac{\mathrm{d} y}{\mathrm{d} t}
& = f_2(x,y) = -x + y^2, \nonumber \\
\frac{\mathrm{d} z}{\mathrm{d} t}
& = f_3(x,y) = x + y,
\label{eq:Sprott_P}
\end{align}
given as system $P$ in~\emph{\cite{Sprott}[Table 1]}.
By \emph{Definition}~$\ref{def:CDS_complexity}$,
since $M_0 = 0$, $M_1 = 5$ and $M_2 = 1$,
$(\ref{eq:Sprott_P})$ is a quadratic 
$(0 + 5 + 1, 1) = (6,1)$ \emph{DS}.
Since $f_2(x,y)$ contains the non-chemical
monomial $-x$, $(\ref{eq:Sprott_P})$
is not a \emph{CDS} by \emph{Definition~\ref{def:CDS}}, 
and hence there is no associated \emph{CRN}.
\end{example}

\begin{example} 
\label{ex:10_3}
Consider the three-dimensional quadratic \emph{DS}
\begin{align}
\frac{\mathrm{d} x}{\mathrm{d} t} 
& = f_1(x,y) = \alpha_1 + \alpha_2 x + \alpha_3 x^2
- \alpha_4 x y, \nonumber \\
\frac{\mathrm{d} y}{\mathrm{d} t} 
& = f_2(x,y,z) = \alpha_2 x - \alpha_5 y + \alpha_6 z, 
\nonumber \\
\frac{\mathrm{d} z}{\mathrm{d} t} 
& = f_3(x,y,z) = \alpha_2 x - \alpha_7 z + \alpha_8 y z, 
\label{eq:CDS_P}
\end{align}
with coefficients $\alpha_1,\alpha_2,\ldots,\alpha_8 > 0$.
Since every monomial with a negative coefficient 
is multiplied by $x$ in $f_1(x,y)$, 
this function is chemical;
the same applies to the second and third equations.
Therefore, by \emph{Definition~\ref{def:CDS}}, 
$(\ref{eq:CDS_P})$ is a \emph{CDS}.
By \emph{Definition}~$\ref{def:CDS_complexity}$,
since $M_0 = 1$, $M_1 = 6$ and $M_2 = 3$,
$(\ref{eq:CDS_P})$ is a quadratic 
$(10,3)$ \emph{CDS}.
This \emph{CDS} is dynamically similar
to~$(\ref{eq:Sprott_P})$, and constructed 
in~\emph{\cite{Chaos_1}[Theorem 4.3]}.
 
Denoting by $X, Y, Z$ the chemical species
with concentrations $x, y, z$, 
an associated \emph{CRN} reads
\begin{align}
\varnothing & \xrightarrow[]{\alpha_1} X, \; \; \; 
X \xrightarrow[]{\alpha_2} 2 X,  \; \; \; 
X \xrightarrow[]{\alpha_2} X + Y,  \; \; \; 
X \xrightarrow[]{\alpha_2} X + Z, \; \; \; 
2 X \xrightarrow[]{\alpha_3} 3 X, \nonumber \\
X + Y & \xrightarrow[]{\alpha_4} Y, \; \; \; 
Y \xrightarrow[]{\alpha_5} \varnothing, \; \; \; 
Z \xrightarrow[]{\alpha_6} Y + Z, \; \; \; 
Z \xrightarrow[]{\alpha_7} \varnothing, \; \; \; 
Y + Z \xrightarrow[]{\alpha_8} Y + 2 Z,
\label{eq:CRN_P_1}
\end{align}
where $\varnothing$ denotes some species
not explicitly modelled. Since 
$R_0 = 1$, $R_1 = 6$ and $R_2 = 3$, 
$(\ref{eq:CRN_P_1})$ is a quadratic
$(10,3)$ \emph{CRN} by 
\emph{Definition}~$\ref{def:CDS_complexity}$. 
This network is known as the \emph{canonical CRN}
induced by~$(\ref{eq:CDS_P})$. 
Since the monomials $x$ are multiplied,
up to sign, by the same coefficients in 
the first, second and third equations, 
the three canonical reactions 
$X \xrightarrow[]{\alpha_2} 2 X$, 
$X \xrightarrow[]{\alpha_2} X + Y$, 
$X \xrightarrow[]{\alpha_2} X + Z$
can be \emph{fused} into a single 
non-canonical reaction 
$X \xrightarrow[]{\alpha_2} 2 X + Y + Z$. 
This fusion leads to a quadratic $(8,3)$ \emph{CRN}
associated to $(\ref{eq:CDS_P})$, given by
\begin{align}
\varnothing & \xrightarrow[]{\alpha_1} X, \; \; \; 
X \xrightarrow[]{\alpha_2} 2 X + Y + Z,  \; \; \; 
2 X \xrightarrow[]{\alpha_3} 3 X, \; \; \; 
X + Y \xrightarrow[]{\alpha_4} Y,\nonumber \\
Y & \xrightarrow[]{\alpha_5} \varnothing, \; \; \; 
Z \xrightarrow[]{\alpha_6} Y + Z, \; \; \; 
Z \xrightarrow[]{\alpha_7} \varnothing, \; \; \; 
Y + Z \xrightarrow[]{\alpha_8} Y + 2 Z.
\label{eq:CRN_P}
\end{align}
See \emph{Appendix~\ref{app:CRN}} for 
more details on how to construct \emph{CRN}s
for any given \emph{CDS}.
\end{example}

\textbf{Chaos}. To discuss chaos,
let us first present a basic definition.

\begin{definition}
\label{def:invariant_set} 
Let $\mathbf{x}(t;\mathbf{x}_0)$ 
be the solution of~$(\ref{eq:DS})$
with $\mathbf{x}(0;\mathbf{x}_0) = \mathbf{x}_0$.
Assume that the following statement holds:
if $\mathbf{x}_0 \in \mathbb{V} \subset \mathbb{R}^N$, 
then $\mathbf{x}(t;\mathbf{x}_0) \in \mathbb{V}$ 
for all $t \in \mathbb{R}$.
Then, $\mathbb{V}$ is said to be an
\emph{invariant set} for \emph{DS}~$(\ref{eq:DS})$.
\end{definition} 

Loosely speaking, a compact invariant set $\mathbb{V}$ 
for DS~(\ref{eq:DS}) is \emph{chaotic} if it has the 
following two properties~\cite{Wiggins,Sprott_book_1}:
(i) trajectories in $\mathbb{V}$ 
are ``sensitive to initial conditions'', and
(ii) $\mathbb{V}$ is ``irreducible'';
$\mathbb{V}$ is said to be 
a \emph{chaotic attractor} if additionally 
(iii) there exists a neighborhood 
$\mathbb{U} \supset \mathbb{V}$ 
such that if $\mathbf{x}(0;\mathbf{x}_0) = \mathbf{x}_0 \in \mathbb{U}$,
then $\mathbf{x}(t;\mathbf{x}_0)$ approaches 
$\mathbb{V}$ as $t \to \infty$. 
Properties (i) and (ii) 
can be defined in a number of non-equivalent 
ways~\cite{Chaos_Def}. For example, (i) can be defined
via existence of a positive 
\emph{Lyapunov exponent} (LE)~\cite{Lyapunov,Sprott_book_1},
which implies that nearby trajectories in $\mathbb{V}$ 
separate exponentially fast.
More broadly, there are also other
definitions of chaos that impose additional properties, 
or relax some of the stated properties~\cite{Chaos_Def};
however, all the definitions 
explicitly or implicitly exclude chaos
from occurring in DSs~(\ref{eq:DS})
with $N \le 2$ or $n = 1$.
In this paper, we use only some 
properties of chaotic sets.
In particular, in Section~\ref{sec:theory},
we present some results that rely
only on compactness of chaotic sets,
and on the fact that they occur in 
DSs~(\ref{eq:DS}) only 
if $N \ge 3$ and $n \ge 2$;
in Section~\ref{sec:examples},
we use positive LEs as 
numerical indicators of chaos.

\section{Theory}
\label{sec:theory}
In this section, we prove some elementary
results about CDSs with chaos. 
We start by establishing some constraints on the sign and 
number of monomials in DSs that have a compact invariant 
set in the non-negative orthant. More specifically, 
to ensure that the invariant set has a significant
positive portion, we also demand that 
at least one of the underlying solutions spends infinite
amount of time in the positive orthant.

\begin{theorem} $($\textbf{Sign theorem}$)$
\label{theorem:sign}
Let $\mathbb{V} \subset \mathbb{R}_{\ge}^N$
be a compact invariant set for \emph{DS}~$(\ref{eq:DS})$.
Assume that there exists 
$\mathbf{x}_0 \in \mathbb{V}$ such that
the solution $\mathbf{x}(t;\mathbf{x}_0)$
of~$(\ref{eq:DS})$ spends infinite amount of time 
in some non-empty compact set $\mathbb{K} \subset
\mathbb{V} \cap \mathbb{R}_{>}^N$.
Then, $(\ref{eq:DS})$ has at least $2 N$
monomials. In particular, $f_i(\mathbf{x})$ has 
at least one monomial with a negative coefficient 
and at least one monomial with a positive coefficient 
for every $i = 1,2,\ldots,N$.
\end{theorem}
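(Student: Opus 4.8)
The plan is to prove the per-equation claim directly, since the count then follows immediately: if each $f_i$ contains at least one monomial with a positive coefficient and at least one with a negative coefficient, then each $f_i$ has at least two (necessarily distinct) monomials, giving at least $2N$ in total. Throughout, I would fix the distinguished trajectory $\mathbf{x}(t) := \mathbf{x}(t;\mathbf{x}_0)$, which stays in $\mathbb{V}$ for all $t$ by invariance and is therefore bounded since $\mathbb{V}$ is compact; in particular each coordinate $x_i(t)$ is bounded. I would also record that $\mathbb{K}$, being compact and contained in $\mathbb{R}_{>}^N$, is bounded away from the coordinate hyperplanes, and that the hypothesis that the solution ``spends infinite amount of time in $\mathbb{K}$'' is read as: the set $T := \{t \ge 0 : \mathbf{x}(t) \in \mathbb{K}\}$ has infinite Lebesgue measure.

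For the positive monomial, I would argue by contradiction: suppose some $f_i$ has no monomial with a positive coefficient. Since every monomial $\mathbf{x}^{\alpha}$ is strictly positive on $\mathbb{R}_{>}^N$ and $f_i \not\equiv 0$, writing $f_i = \sum_k c_k \mathbf{x}^{\alpha_k}$ with all $c_k < 0$ gives $f_i(\mathbf{x}) < 0$ for every $\mathbf{x} \in \mathbb{R}_{>}^N$, and hence $f_i \le 0$ on the closed orthant $\mathbb{R}_{\ge}^N$ by continuity. Consequently $\dot{x}_i(t) = f_i(\mathbf{x}(t)) \le 0$ for all $t$, so $x_i$ is non-increasing. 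By continuity of $f_i$ and compactness of $\mathbb{K} \subset \mathbb{R}_{>}^N$ there is $\varepsilon > 0$ with $f_i \le -\varepsilon$ on $\mathbb{K}$. Then, for any $t \ge 0$,
$$x_i(t) - x_i(0) = \int_0^t f_i(\mathbf{x}(s))\, \mathrm{d} s \le \int_{T \cap [0,t]} f_i(\mathbf{x}(s))\, \mathrm{d} s \le -\varepsilon\, \lambda\big(T \cap [0,t]\big),$$
where the first inequality uses $f_i \le 0$ to discard the integral over the complement of $T$, and $\lambda$ denotes Lebesgue measure. Since $\lambda(T) = \infty$, letting $t \to \infty$ forces $x_i(t) \to -\infty$, contradicting boundedness of the trajectory. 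Hence $f_i$ must contain a positive monomial.

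The negative monomial is obtained by the symmetric argument: if $f_i$ had only positive coefficients, then $f_i \ge 0$ on $\mathbb{R}_{\ge}^N$ and $f_i \ge \varepsilon > 0$ on $\mathbb{K}$, so the same integral estimate with reversed inequalities gives $x_i(t) \to +\infty$, again contradicting boundedness. Thus each $f_i$ contains at least one positive and one negative monomial, and the bound of $2N$ monomials follows.

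The main obstacle I anticipate is not any single estimate but the careful handling of the contribution from times when the trajectory lies outside $\mathbb{K}$: the bound $f_i \le -\varepsilon$ holds only on $\mathbb{K}$, so one cannot conclude a net decrease from the local bound alone. The key device is that the assumed absence of a positive monomial yields the global sign $f_i \le 0$ on the entire non-negative orthant, which is exactly what lets the contribution from $[0,t] \setminus T$ be discarded without losing the inequality. I would therefore emphasize this two-part structure --- a uniform strict sign on the compact set $\mathbb{K}$ combined with a (non-strict) global sign on $\mathbb{R}_{\ge}^N$ --- as the crux, together with the observation that infinite residence time in $\mathbb{K}$ converts the strict local bound into an unbounded monotone drift that no compact invariant set can sustain.
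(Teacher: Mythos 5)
Your proof is correct and follows essentially the same route as the paper's: both argue by contradiction that an $f_i$ of one sign admits a uniform strict bound $|f_i| \ge \varepsilon$ on the compact set $\mathbb{K} \subset \mathbb{R}_{>}^N$, use the non-strict global sign of $f_i$ on $\mathbb{R}_{\ge}^N$ to discard the contribution from times outside $\mathbb{K}$, and let the infinite residence time in $\mathbb{K}$ accumulate an unbounded monotone drift in $x_i$. The only cosmetic difference is the endgame: your integral estimate yields $x_i(t) \to -\infty$, contradicting boundedness of the trajectory in the compact set $\mathbb{V}$, whereas the paper stops as soon as $x_i$ exits the range $[m, M]$ of the $i$-th coordinate on $\mathbb{K}$, so that the trajectory can never re-enter $\mathbb{K}$, contradicting the infinite-residence hypothesis directly.
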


\begin{proof}
For contradiction, assume that for some $i$ 
all monomials in $f_i(\mathbf{x})$
have negative (respectively, positive) coefficients. 
By compactness of $\mathbb{K} \subset \mathbb{R}_{>}^N$, 
there exist $m, M > 0$ such that 
for every $\mathbf{x} \in \mathbb{K}$
every component $x_i$ satisfies $m \le x_i \le M$.
Furthermore, there exists $F > 0$ such that 
$f_i(\mathbf{x}) \le - F$
(respectively, $f_i(\mathbf{x}) \ge F$)
for all $\mathbf{x} \in \mathbb{K}$.
Hence, $x_i(t;\mathbf{x}_0)$ 
monotonically decreases (respectively, increases)
during every time-interval during which
$\mathbf{x}(t;\mathbf{x}_0) \in \mathbb{K}$.
Furthermore, since $f_i(\mathbf{x}) \le 0$
(respectively, $f_i(\mathbf{x}) \ge 0$)
for all $\mathbf{x} \in \mathbb{V} \subset \mathbb{R}_{\ge}^N$, 
$x_i(t;\mathbf{x}_0)$ cannot increase
(respectively, decrease) when 
$\mathbf{x}(t;\mathbf{x}_0) \not\in \mathbb{K}$.
By assumption,  
$\mathbf{x}(t;\mathbf{x}_0)$
spends infinite amount of time in $\mathbb{K}$
for $t \in [0,\infty)$, 
so that there exists
$t^* > 0$ such that
$x_i(t^*;\mathbf{x}_0) < m$
(respectively, $x_i(t^*;\mathbf{x}_0) > M$),
and therefore $\mathbf{x}(t;\mathbf{x}_0) 
\not \in \mathbb{K}$ for all $t \ge t^*$, 
leading to contradiction. 
\end{proof}

\noindent
\textbf{Remark}. If $\mathbb{V}$
is in the positive orthant,
$\mathbb{V} \subset \mathbb{R}_{>}^N$, 
then Theorem~\ref{theorem:sign} holds trivially
with $\mathbb{K} = \mathbb{V}$. 

\noindent
\textbf{Remark}. By Theorem~\ref{theorem:sign},
three-dimensional polynomial DSs with chaos in 
$\mathbb{R}_{>}^3$ have at least $6$ monomials.

We now show that three-dimensional 
CDSs with chaos have a monomial  
of particular form and sign.

\begin{theorem} 
\label{theorem:non_linearity}
Assume that~$(\ref{eq:DS})$ 
is a \emph{CDS} with $N = 3$ 
and a chaotic set 
in $\mathbb{R}_{\ge}^3$.
Then, for some $i$ polynomial
$f_i(\mathbf{x})$ contains 
a monomial of the form 
$-\alpha x_1^{\nu_1} x_2^{\nu_2} x_3^{\nu_3}$, 
where $\alpha > 0$, $\nu_i \ge 1$
and $\nu_k \ge 1$ for some $k \ne i$.
\end{theorem}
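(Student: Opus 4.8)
The plan is to argue by contradiction: I would assume that no monomial of the asserted form occurs, deduce that the CDS is then a \emph{cooperative} system on $\mathbb{R}_{\ge}^3$, and invoke the Poincar\'e--Bendixson theory for three-dimensional monotone systems, which forbids chaos. First I would extract what the chemical structure already gives. By Definition~\ref{def:CDS}, every monomial of $f_i$ with a negative coefficient must be nonnegative when $x_i = 0$ (with the other coordinates nonnegative); a negative monomial $-\alpha\, x_1^{\nu_1} x_2^{\nu_2} x_3^{\nu_3}$ is nonnegative at $x_i = 0$ only if it vanishes there, so necessarily $\nu_i \ge 1$. Hence the real content of the theorem is the existence of a negative monomial that \emph{additionally} carries a factor $x_k$ with $k \ne i$. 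Negating this, I would assume every negative monomial in every $f_i$ has the purely diagonal form $-\alpha\, x_i^{\nu_i}$, so that I can write $f_i(\mathbf{x}) = P_i(\mathbf{x}) - Q_i(x_i)$, where $P_i$ collects the positive monomials (all with nonnegative coefficients) and $Q_i$ is a one-variable polynomial in $x_i$ with positive coefficients.

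Next I would compute the off-diagonal entries of the Jacobian. For $j \ne i$ the term $Q_i$ contributes nothing, so $\partial f_i/\partial x_j = \partial P_i/\partial x_j$; since $P_i$ has nonnegative coefficients, this partial derivative is itself a polynomial with nonnegative coefficients and is therefore $\ge 0$ throughout $\mathbb{R}_{\ge}^3$. Thus under the negation the CDS is cooperative on the nonnegative orthant, which is forward-invariant and order-convex for any CDS.

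Finally I would apply the fact that a three-dimensional cooperative system satisfies the Poincar\'e--Bendixson property: its compact limit sets are unordered antichains, hence lie on Lipschitz $2$-manifolds over a plane, so the restricted flow is topologically planar and admits only equilibria and periodic orbits. Applying this to the $\omega$-limit set of a point whose forward orbit is dense in the compact, transitive chaotic set $\mathbb{V}$ would force $\mathbb{V}$ itself to reduce to an equilibrium or a periodic orbit, contradicting sensitive dependence (a positive Lyapunov exponent). This contradiction establishes the theorem.

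I expect the cooperative computation to be routine and the clean invocation of the monotone-systems machinery to be the main obstacle. Concretely, I would need to (a) confirm that the nonordering / Poincar\'e--Bendixson conclusion applies to $\omega$-limit sets of cooperative (not merely competitive) systems and does not secretly require strong monotonicity, i.e. irreducibility of the off-diagonal Jacobian, which can fail if the interaction graph encoded by the $P_i$ is not strongly connected; and (b) handle the case where $\mathbb{V}$ meets the boundary $\partial \mathbb{R}_{\ge}^3$, where I would restrict the monotone flow to the invariant closed orthant rather than work on an open neighborhood. If irreducibility is unavailable, I would lean on the general principle that limit sets of monotone flows are unordered, so that in $\mathbb{R}^3$ they are graphs of Lipschitz functions over a two-dimensional quotient, which already rules out a positive Lyapunov exponent.
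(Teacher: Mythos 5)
Your proposal is correct and takes essentially the same route as the paper: negate the claim, note that Definition~\ref{def:CDS} then forces every negative monomial into the diagonal form $-\alpha x_i^{\nu_i}$, conclude the system is cooperative on $\mathbb{R}_{\ge}^3$, and invoke Hirsch's limit-set theory (the paper cites Hirsch's Theorem A, reducing compact invariant sets to those of a two-dimensional Lipschitz system) to exclude chaos. Your worry in (a) is unfounded, since Hirsch's Theorem A holds for cooperative systems without any irreducibility or strong-monotonicity hypothesis, so the argument closes exactly as you sketched.
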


\begin{proof}
For contradiction, assume that
for every $i = 1,2,3$ polynomial
$f_i(\mathbf{x})$ contains no monomials
of the form $-\alpha x_1^{\nu_1} x_2^{\nu_2} x_3^{\nu_3}$, 
where $\alpha > 0$, $\nu_i \ge 1$
and $\nu_k \ge 1$ for some $k \ne i$.
Then, by Definition~\ref{def:CDS}, 
if a monomial in $f_i(\mathbf{x})$ 
is multiplied by a negative
coefficient, then it takes the form 
$-\alpha x_i^{\nu_i}$ with $\alpha > 0$ and $\nu_i \ge 1$.
The Jacobian matrix for~(\ref{eq:DS}) then has 
non-negative off-diagonal elements in $\mathbb{R}_{\ge}^3$,
i.e. (\ref{eq:DS}) is then \emph{cooperative}~\cite{Hirsch}.
It then follows from~\cite{Hirsch}[Theorem A]
that every compact invariant set for DS~(\ref{eq:DS})
in $\mathbb{R}_{\ge}^3$ also exists for some 
two-dimensional DS with Lipschitz right-hand side;
consequently, (\ref{eq:DS}) cannot
have a chaotic set in $\mathbb{R}_{\ge}^3$.
\end{proof}

\textbf{CDSs with only one quadratic}. 
Let us now further restrict DSs~(\ref{eq:DS}) with $N = 3$, 
by also demanding that $n = 2$ and that there is 
only one quadratic monomial.
Such DSs can be transformed via permutation of the 
dependent variables and time-rescaling into
\begin{align}
\frac{\mathrm{d} x}{\mathrm{d} t} & = 
\alpha_1 + \alpha_2 x + \alpha_3 y + \alpha_4 z
+ m(x,y,z), \nonumber \\
\frac{\mathrm{d} y}{\mathrm{d} t} & = 
\alpha_5 + \alpha_6 x - \alpha_7 y + \alpha_8 z,
\nonumber \\
\frac{\mathrm{d} z}{\mathrm{d} t} & = 
\alpha_9 + \alpha_{10} x + \alpha_{11} y - \alpha_{12} z,
\label{eq:DS_quadratic_1}
\end{align}
where $m(x,y,z) \in \{\pm x^2, 
\pm y^2, \pm x y, \pm y z\}$. 
Chaos is reported in DSs of the 
form~(\ref{eq:DS_quadratic_1}) 
for every choice of $m(x)$; see $(6,1)$ 
Sprott systems F--S~\cite{Sprott}[Table 1].
In contrast, Theorem~\ref{theorem:non_linearity}
implies that such CDSs may possibly have chaos 
for only one particular choice of $m(x,y,z)$. 

\begin{theorem} 
\label{theorem:non_existence}
Assume that~$(\ref{eq:DS_quadratic_1})$
is a \emph{CDS}. If $m(x,y,z) \ne - x y$,
then $(\ref{eq:DS_quadratic_1})$ 
has no chaotic set in $\mathbb{R}_{\ge}^3$.
If $m(x,y,z) = - x y$ and if~$(\ref{eq:DS_quadratic_1})$ has a 
chaotic set in $\mathbb{R}_{>}^3$,
then $\alpha_7 > 0$, $\alpha_{12} > 0$
and $\alpha_7 \alpha_{12} - \alpha_8 \alpha_{11} > 0$. 
\end{theorem}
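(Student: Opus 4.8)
The first claim follows directly from Theorem~\ref{theorem:non_linearity}. I first record the sign constraints forced by the \emph{CDS} hypothesis: reading Definition~\ref{def:CDS} off each equation of~(\ref{eq:DS_quadratic_1}) shows that every coefficient multiplying a monomial not divisible by its corresponding variable must be nonnegative, so that $\alpha_1,\alpha_3,\alpha_4,\alpha_5,\alpha_6,\alpha_8,\alpha_9,\alpha_{10},\alpha_{11}\ge 0$ while $\alpha_2,\alpha_7,\alpha_{12}$ are unconstrained; in particular $m\in\{-y^2,-yz\}$ is ruled out, since neither is chemical in the $x$-equation. The only nonlinear monomial anywhere in~(\ref{eq:DS_quadratic_1}) is $m$, and since $f_2,f_3$ are affine, the negative nonlinear monomial with at least two distinct factors guaranteed by Theorem~\ref{theorem:non_linearity} can only be $m$ itself, sitting in $f_1$. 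Among the admissible chemical choices, $-x^2$ has a single distinct factor (failing the ``$\nu_k\ge1$ for some $k\ne i$'' requirement) and the positive choices cannot be the negative monomial, leaving $m=-xy$ as the only possibility. Hence if $m\ne -xy$ there is no chaotic set in $\mathbb{R}_{\ge}^3$.

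For the second claim I set $m=-xy$ and suppose a chaotic set $\mathbb{V}\subset\mathbb{R}_{>}^3$ exists. Being compact in the open positive orthant, $\mathbb{V}$ has all coordinates bounded above and bounded below by a positive constant, and $x$ must be non-constant on $\mathbb{V}$ (otherwise the dynamics collapses to two dimensions, which precludes chaos). Writing $w=(y,z)^{\top}$, the last two equations form the affine system
\[
\frac{\mathrm{d}}{\mathrm{d} t} w = A\,w + b(x),
\qquad
A = \begin{pmatrix} -\alpha_7 & \alpha_8 \\ \alpha_{11} & -\alpha_{12} \end{pmatrix},
\qquad
b(x) = \begin{pmatrix} \alpha_5 + \alpha_6 x \\ \alpha_9 + \alpha_{10} x \end{pmatrix}.
\]
The two structural facts I will exploit, both consequences of the sign constraints above, are that $A$ is a Metzler matrix (its off-diagonal entries $\alpha_8,\alpha_{11}$ are nonnegative) and that $b(x)$ has nonnegative entries for $x>0$.

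I would first extract $\alpha_7>0$ and $\alpha_{12}>0$ from the individual equations. In $\dot y = (\alpha_5+\alpha_6 x+\alpha_8 z) - \alpha_7 y$ the bracketed term is nonnegative on $\mathbb{R}_{>}^3$, so if $\alpha_7<0$ then $\dot y \ge |\alpha_7|\, y$, forcing unbounded growth of $y$ and contradicting compactness of $\mathbb{V}$; if $\alpha_7=0$, then $y$ is nondecreasing along the flow, hence constant on the irreducible set $\mathbb{V}$, which forces $\alpha_5+\alpha_6 x+\alpha_8 z\equiv 0$ there and therefore (using that $x$ is non-constant and $z>0$) $\alpha_5=\alpha_6=\alpha_8=0$; then $y$ is a global first integral, $\mathbb{V}$ lies in an invariant plane $\{y=\mathrm{const}\}$, and the induced two-dimensional flow cannot be chaotic. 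The symmetric argument on $\dot z=(\alpha_9+\alpha_{10}x+\alpha_{11}y)-\alpha_{12}z$ yields $\alpha_{12}>0$. It then remains to show $\det A=\alpha_7\alpha_{12}-\alpha_8\alpha_{11}>0$; since $\operatorname{trace}A=-\alpha_7-\alpha_{12}<0$, for a $2\times2$ matrix this is exactly the statement that $A$ is Hurwitz. Suppose it is not. As $A$ is Metzler, Perron–Frobenius theory furnishes a real spectral abscissa $\lambda\ge 0$ with a nonnegative left eigenvector $v\ge 0$, $v\ne 0$; then $V=v^{\top}w$ is strictly positive on $\mathbb{R}_{>}^3$ and satisfies $\dot V=\lambda V + v^{\top}b(x)\ge 0$, the last term being nonnegative by the sign constraints. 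If $\lambda>0$ the inequality is strict and $V$ grows without bound, contradicting compactness; if $\lambda=0$, then $V$ is nondecreasing and hence constant on the irreducible set $\mathbb{V}$, which forces $v^{\top}b(x)\equiv 0$ there, and since $v^{\top}b(x)$ is affine in $x$ with nonnegative coefficients while $x$ is non-constant, all those coefficients must vanish, making $V$ a global first integral and confining $\mathbb{V}$ to an invariant plane—again impossible. This contradiction delivers $\det A>0$.

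The easy part of the second claim is the elimination of genuine instability ($\lambda>0$, or a negative diagonal entry), where a one-line differential inequality produces exponential growth incompatible with a compact set. The delicate part, which I expect to be the main obstacle, is the critical case ($\lambda=0$, or $\alpha_7=0$, $\alpha_{12}=0$): here nothing grows, and one must instead invoke irreducibility of the chaotic set, using the standard fact that a continuous functional that is monotone along the flow is constant on a topologically transitive compact invariant set. Constancy collapses the nonnegative forcing to zero, produces a conserved linear quantity, and traps the dynamics in a two-dimensional invariant plane, contradicting the fact that chaos requires at least three dimensions.
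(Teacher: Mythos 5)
Your proof is correct, and the first part (ruling out every $m(x,y,z)\ne -xy$ via Theorem~\ref{theorem:non_linearity}) matches the paper exactly; but for the second part you take a genuinely different route in the degenerate cases. The paper obtains $\alpha_7,\alpha_{12}>0$ for free from Theorem~\ref{theorem:sign} (applicable with $\mathbb{K}=\mathbb{V}$ since $\mathbb{V}\subset\mathbb{R}_{>}^3$), whereas you re-derive it by hand; more substantially, to exclude $\alpha_7\alpha_{12}-\alpha_8\alpha_{11}\le 0$ the paper never invokes Perron--Frobenius. It first proves the auxiliary fact that $\alpha_6>0$ or $\alpha_{10}>0$ (otherwise the $(y,z)$ subsystem decouples and the flow is effectively at most two-dimensional), and then chooses an explicit functional $u=ay+bz$ with $a/b\in[\alpha_{12}/\alpha_8,\alpha_{11}/\alpha_7]$ --- an interval that is non-empty precisely under the contradiction hypothesis, and whose defining inequalities are exactly your sub-eigenvector condition $v^{\top}A\ge 0$ --- so that $\mathrm{d}u/\mathrm{d}t\ge(a\alpha_6+b\alpha_{10})x\ge G>0$ uniformly on the compact set $\mathbb{V}$. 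This strict uniform bound disposes of your critical case $\lambda=0$ outright: $u$ increases without bound, and no appeal to irreducibility is ever needed. Your version instead uses the Metzler spectral abscissa and, when $\lambda=0$ (and likewise in your $\alpha_7=0$, $\alpha_{12}=0$ subcases), the fact that a continuous functional monotone along the flow is constant on a transitive compact invariant set. That lemma is standard, but it is the one place where you use a property of chaos beyond compactness, invariance and the $N\ge 3$ exclusion; the paper deliberately advertises that its results in Section~\ref{sec:theory} rely only on those, since ``irreducibility'' (property (ii)) can be formalized in non-equivalent ways, and under the weakest reading (mere existence of a dense orbit) monotone-implies-constant fails --- the closure of a heteroclinic orbit is a counterexample. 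So your argument is valid under the standard open-sets or chain-transitive reading of irreducibility, and in exchange it buys something the paper's does not: you never need the $\alpha_6>0$ or $\alpha_{10}>0$ decoupling step, since when the forcing coefficients vanish you still conclude via a conserved linear quantity and a planar reduction.
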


\begin{proof}
The first statement follows directly from Theorem~\ref{theorem:non_linearity}.
Let us assume that $m(x,y,z) = - x y$
and that a chaotic set 
$\mathbb{V} \subset \mathbb{R}_{>}^3$ exists.
By Definition~\ref{def:CDS}, 
$\alpha_i \ge 0$ for all $i \ne 2, 7, 12$
and, by Theorem~\ref{theorem:sign}, 
$\alpha_7, \alpha_{12} > 0$.
Furthermore, $\alpha_6 > 0$ or $\alpha_{10} > 0$;
otherwise, if $\alpha_{6} = \alpha_{10} = 0$,
then chaos is not possible, since chaos
cannot occur in one- or two-dimensional DSs. 
Assume for contradiction that
$\alpha_7 \alpha_{12} - \alpha_8 \alpha_{11} \le 0$,
which can hold only if $\alpha_8, \alpha_{11} > 0$.  
Let $u(t) \equiv a y(t) + b z(t)$ and choose any
$a, b > 0$ such that $a/b \in 
[\alpha_{12}/\alpha_8, \alpha_{11}/\alpha_7]$. 
Then, by compactness of $\mathbb{V} \subset \mathbb{R}_{>}^3$, 
there exists $G > 0$ such that for every 
$(x,y,z) \in \mathbb{V}$
\begin{align}
\frac{\mathrm{d} u}{\mathrm{d} t} & = 
(a \alpha_5 + b \alpha_9)
+ (a \alpha_6 + b \alpha_{10}) x
+ (-a \alpha_{7} + b \alpha_{11}) y
+ (a \alpha_8 - b \alpha_{12}) z
\ge (a \alpha_6 + b \alpha_{10}) x \ge G.
\end{align}
Hence, $y(t)$ or $z(t)$ then increases 
monotonically, which is not possible 
if $\mathbb{V}$ is both invariant and compact. 
\end{proof}

\section{Examples}
\label{sec:examples}
Let us consider DS~(\ref{eq:DS}) with $N = 3$ and $n = 2$,
which we write as follows:
\begin{align}
\frac{\mathrm{d} x}{\mathrm{d} t} & = \alpha_1 + \alpha_2 x + \alpha_3 y + \alpha_4 z 
+ \alpha_5 x^2 + \alpha_6 y^2 + \alpha_7 z^2 + \alpha_8 x y + \alpha_9 x z + \alpha_{10} y z,
\nonumber \\
\frac{\mathrm{d} y}{\mathrm{d} t} & =  \alpha_{11} + \alpha_{12} x + \alpha_{13} y + \alpha_{14} z + \alpha_{15} x^2 + \alpha_{16} y^2 + \alpha_{17} z^2 + \alpha_{18} x y + \alpha_{19} x z + \alpha_{20} y z,
\nonumber \\
\frac{\mathrm{d} z}{\mathrm{d} t} & = \alpha_{21} + \alpha_{22} x + \alpha_{23} y + \alpha_{24} z + \alpha_{25} x^2 + \alpha_{26} y^2 + \alpha_{27} z^2 + \alpha_{28} x y + \alpha_{29} x z + \alpha_{30} y z.
\label{eq:CDS_general}
\end{align}
Necessary for~(\ref{eq:CDS_general}) to be a CDS
with a chaotic set in $\mathbb{R}_{\ge}^3$
is that the following conditions hold:
\begin{align}
\alpha_{i} & \ge 0 \; \; \textrm{for every}
\; \; i = 1,3,4,6,7,10,11,12,14,15,17,19,21,22,23,25,26,28,
\nonumber \\
\alpha_{i} & < 0 \; \; \textrm{for some}
\; \; i = 8,9,18,20,29,30.
\label{eq:CDS_constraint}
\end{align}
In particular, the first line of inequalities
in~(\ref{eq:CDS_constraint}) 
follows directly from Definition~\ref{def:CDS},
while the second line follows from  Theorem~\ref{theorem:non_linearity}.
Furthermore, by Theorem~\ref{theorem:sign}, 
for a chaotic set to exist in $\mathbb{R}_{\ge}^3$, 
with at least one solution that spends 
infinite amount of time in $\mathbb{R}_{>}^3$,
it is necessary that 
every equation of~(\ref{eq:CDS_general}) has 
at least one negative and one positive coefficient, 
resulting in a total of at least $6$ non-zero coefficients. 

Under these constraints, 
we have performed a computer search for 
chaotic CDSs with only $6$ or $7$ monomials 
by considering a large number of combinations 
of the coefficients $\alpha_1,\alpha_2,\ldots,\alpha_{30}$
and initial conditions, 
and choosing only the cases with a detected solution
whose largest LE is at least $0.001$.
For each chosen case, a secondary refined
computer search was performed with the aim to
nullify as many coefficients as possible, 
while fixing the others to either $\pm 1$, 
a small integer or a rational number
with a small number of decimal places,
all while preserving a positive LE;
such systems are known as ``elegant'' in the literature~\cite{Sprott_book_1,Sprott_book_2}.
Furthermore, if a monomial appeared in multiple equations
of the resulting CDSs, then the dependent variables
have been rescaled with positive numbers
such that these monomials are multiplied,
up to sign, by the same coefficients, with
priority given to quadratic monomials.
This procedure reduces the number of (quadratic) 
reactions in the resulting CRNs;
see Example~\ref{ex:10_3} and Appendix~\ref{app:CRN}.
Finally, if two discovered CDSs
were related via only permutation of the variables,
or if they had the same monomials multiplied by 
sign-matched coefficients (e.g. if they are related
via rescaling of the dependent variables with positive numbers),
or if one of the systems had some additional monomials
(e.g. if it is a perturbation of the other CDS),
then we have reported only one of the systems, 
which has lower monomial number and 
more elegant coefficients.

Using this approach, we have discovered
$20$ three-dimensional quadratic chemical systems, 
named $\textrm{CS}_1$--$\textrm{CS}_{20}$
and presented in Tables~\ref{tab:one}--\ref{tab:two}, 
whose numerically detected chaotic attractors
are shown in the $(x,y)$-space in Figure~\ref{fig:1}.
In the tables, each row describes one chemical system:
the first column contains the name of the system
and structural complexity of its CDS and CRN
(see Definition~\ref{def:CDS_complexity}), 
which are respectively presented in the
second and third column. The fourth column shows
the non-negative equilibria of the CDS in the form 
``$\textrm{type}_{i,j}(x^*,y^*,z^*)$'', 
which is defined as follows:
equilibrium $(x^*,y^*,z^*)$ has $i$ eigenvalues 
with negative, and $j$ eigenvalues with positive, 
real parts, and is of type
``$\textrm{n}$'', ``$\textrm{s}$'', ``$\textrm{sf}$''
or ``$\textrm{nh}$'' denoting a node, saddle, 
saddle-focus and a non-hyperbolic equilibrium, 
respectively; the coordinates $(x^*,y^*,z^*)$
are truncated to two decimal places.
The fifth column displays 
three LEs of the chaotic attractor, 
with the corresponding
Lyapunov (Kaplan–Yorke) dimension (LD)~\cite{Sprott_book_1,Sprott_book_2} 
shown in the sixth column.
These quantities have been computed 
for a solution in the basin of attraction 
and close to the chaotic attractor, 
with the integer initial condition (IC)
displayed in the seventh column.
In particular, the largest LE has been 
computed at time $t = 4 \times 10^7$ 
and truncated to four decimal places. 
We have then assumed existence of a chaotic
set with suitable regularity conditions,
so that the second LE is zero, while the third
one is obtained by subtracting the largest LE 
from the trace of the Jacobian averaged along the solution,
which has been computed numerically;
see~\cite{Sprott_book_1,Sprott_book_2} for more details.
For example, $\textrm{CS}_1$ presented in 
the first row of Table~\ref{tab:one} consists of 
a $(6,4)$ CDS and a $(5,3)$ CRN, 
and has a unique equilibrium $(0,0,0)$,
which is non-hyperbolic with one negative eigenvalue
and no eigenvalues with positive real parts;
furthermore, the three LEs are $0.1597,0,-6.9947$, 
the LD is $2.0228$, and an initial condition 
in the basin of attraction 
of the chaotic attractor 
is $(x_0,y_0,z_0) = (4,2,3)$. 

Eight systems with $6$ monomials are shown 
in Table~\ref{tab:one}, 
while the remaining twelve with $7$ monomials 
in Table~\ref{tab:two}.
In particular, there are
three $(6,4)$, five $(6,5)$, one $(7,2)$, 
three $(7,3)$ and eight $(7,4)$ CDSs.
At the level of CRNs, there is one $(4,3)$ 
(see $\textrm{CS}_{6}$), 
four $(5,2)$ (see $\textrm{CS}_{9}$, 
$\textrm{CS}_{12}$, $\textrm{CS}_{13}$, 
$\textrm{CS}_{16}$),
seven $(5,3)$ (see $\textrm{CS}_{1}$, 
$\textrm{CS}_{2}$, $\textrm{CS}_{3}$, 
$\textrm{CS}_{14}$, $\textrm{CS}_{17}$, 
$\textrm{CS}_{18}$, $\textrm{CS}_{19}$), 
four $(5,4)$ (see $\textrm{CS}_{4}$, 
$\textrm{CS}_{5}$, $\textrm{CS}_{7}$,
$\textrm{CS}_{8}$), 
two $(6,3)$ (see $\textrm{CS}_{10}$,
$\textrm{CS}_{11}$) and two $(7,4)$ 
(see $\textrm{CS}_{15}$, $\textrm{CS}_{20}$) CRNs.
Note that $\textrm{CS}_{16}$ 
is unique in two ways:
its CDS has a zero-degree monomial,
and its CRN only reactions with 
at most two molecules both on the left-
and right-hand side.

$\textrm{CS}_{1}$
has a unique equilibrium, while
$\textrm{CS}_{6}$--$\textrm{CS}_{8}$
have the $x$-axis as a line of equilibria,
which we denote by $(x > 0, 0, 0)$ in Table~\ref{tab:one};
to the best of the authors' knowledge, these three are 
the first reported chaotic CDSs with such property.
The rest of the systems have two or three
equilibria. Furthermore, aside from 
$\textrm{CS}_{1}$, all the CDSs have exactly one
positive equilibrium of saddle-focus type
with a negative eigenvalue.
Note also that the origin is a non-hyperbolic equilibrium
for all of the $6$-monomial systems.
When it comes to the chaotic sets,
all numerically appear to be attractors, 
with LD relatively close to $2$;
furthermore, they all appear to be one-winged.
In addition, the chaotic attractors appear to be
located in the positive orthant, 
except for that of $\textrm{CS}_{15}$
which attains $x = 0$.

\section{Discussion}
\label{sec:discussion}
Chaos in chemical systems 
presented in this paper
has been investigated numerically.
In this context, a natural open problem is:
\emph{Prove (rigorously) that the chemical systems from 
\emph{Tables~\ref{tab:one}--\ref{tab:two}} have chaotic attractors}.
Let us now pose additional open problems, 
to be proved or disproved,
centered on the following question: 
\emph{Are there other chaotic chemical systems
that are as simple or even simpler than} $\textrm{CS}_{1}$--$\textrm{CS}_{20}$?

In particular, chaos is detected in 
a CDS with as low as $2$ quadratic monomials, 
see $\textrm{CS}_{9}$.
In contrast, no chaos has been detected 
in the $(13,1)$ CDS~(\ref{eq:DS_quadratic_1}) under the 
constraints from Theorem~\ref{theorem:non_existence},
suggesting that $1$ quadratic monomial alone
may be insufficient for 
chaos in three-dimensional quadratic CDSs.
\begin{problem}
Prove that for every $n \ge 2$ 
three-dimensional $n$-degree \emph{CDSs}
with only one nonlinear monomial 
have no chaotic set 
in $\mathbb{R}_{\ge}^3$. 
In particular, for $n = 2$, 
consider \emph{DS}~$(\ref{eq:DS_quadratic_1})$
with $m(x,y,z) = -x y$, $\alpha_i \ge 0$ 
for all $i \ne 2,7,12$, 
$\alpha_7, \alpha_{12} > 0$ and $\alpha_7 \alpha_{12}
- \alpha_{8} \alpha_{11} > 0$.
Prove that such \emph{CDS}s have no chaotic set 
in $\mathbb{R}_{\ge}^3$.
\end{problem}

More broadly, $\textrm{CS}_{1}$--$\textrm{CS}_{20}$
all have at least one negative
and one positive quadratic monomial.
One negative quadratic monomial 
is necessary for chaos by Theorem~\ref{theorem:non_linearity}.
Is a positive quadratic monomial also necessary?
\begin{problem}
Prove that for every $n \ge 2$ 
three-dimensional $n$-degree \emph{CDSs}
without at least one nonlinear monomial
with a positive coefficient have no chaotic set 
in $\mathbb{R}_{\ge}^3$.
In particular, for $n = 2$, 
consider \emph{DS}~$(\ref{eq:CDS_general})$
under the constraints~$(\ref{eq:CDS_constraint})$
and $\alpha_i \le 0$ for all $i = 5,\ldots,10,
15,\ldots,20,25,\ldots,30$,
and with at least one negative and one positive
coefficient in each equation.
Prove that such \emph{CDS}s have no chaotic set 
in $\mathbb{R}_{\ge}^3$.
\end{problem}

$\textrm{CS}_{1}$--$\textrm{CS}_{20}$
all have at least one linear monomial.
Is this necessary for chaos?
\begin{problem}
Prove that three-dimensional quadratic \emph{CDSs}
without at least one linear monomial 
have no chaotic set in $\mathbb{R}_{\ge}^3$. 
In particular,
consider \emph{DS}~$(\ref{eq:CDS_general})$
under the constraints~$(\ref{eq:CDS_constraint})$
and $\alpha_i = 0$ for all 
$i = 1,2,3,4,11,12,13,14,21,22,23,24$, 
and with at least one negative and one positive
coefficient in each equation.
Prove that such \emph{CDS}s have no chaotic set 
in $\mathbb{R}_{\ge}^3$.
\end{problem}

The CDSs from Tables~\ref{tab:one}--\ref{tab:two}
have at least one, and at most two, 
equations which factorize, i.e. 
such that all the monomials contain a common factor, 
except for $\textrm{CS}_{17}$
and $\textrm{CS}_{18}$, for which no equation factorizes.
Is chaos possible in $7$-monomial CDSs
with all three equations factorable
(e.g. in the Lotka-Volterra form~\cite{LV})?
\begin{problem}
Consider three-dimensional quadratic 
\emph{CDS}s with at most $7$ monomials such that 
each equation is factorable.
Prove that such \emph{CDS}s have
no chaotic set in $\mathbb{R}_{\ge}^3$.
\end{problem}

$\textrm{CS}_{1}$--$\textrm{CS}_{8}$
have $6$ monomials of which at least
$4$ are quadratic. 
Is chaos possible with fewer quadratics?
\begin{problem}
Prove that no three-dimensional quadratic 
$(6,2)$ or $(6,3)$ \emph{CDS} has a chaotic set 
in $\mathbb{R}_{\ge}^3$.
\end{problem}

$\textrm{CS}_{6}$ has a $(4,3)$ CRN, 
while $\textrm{CS}_{9}$, $\textrm{CS}_{12}$, 
$\textrm{CS}_{13}$ and $\textrm{CS}_{16}$ have $(5,2)$ CRNs;
furthermore, the CRN of $\textrm{CS}_{16}$ has 
at most two molecules on the right-hand side
of the reactions. Are there simpler CRNs with chaos?

\begin{problem}
Prove that no three-dimensional quadratic $(c,d)$ 
\emph{CRN} with $c \le 3$, or $d \le 1$, 
or $(c,d) = (4,2)$,
has a chaotic set in $\mathbb{R}_{\ge}^3$. 
Prove also that no three-dimensional 
quadratic $(c,d)$ \emph{CRN} with $c \le 4$,
and with only reactions with at most two 
molecules on the right-hand side, 
has a chaotic set in $\mathbb{R}_{\ge}^3$.
\end{problem}

In~\cite{Chaos_1}, two chaotic three-dimensional 
CDSs with special properties are constructed:
a quadratic $(11,5)$ CDS with a unique and stable equilibrium,
and a cubic $(11,5,1)$ CDS with a two-wing chaotic attractor. 
In contrast, $\textrm{CS}_{1}$--$\textrm{CS}_{20}$
all have at least one unstable equilibrium in 
$\mathbb{R}_{\ge}^3$, 
and chaotic attractors which appear to have only one wing. 
Can such CDSs display a unique stable equilibrium, 
or two-wing chaos?
\begin{problem}
Consider three-dimensional quadratic 
\emph{CDS}s with at most $7$ monomials
and a chaotic set in $\mathbb{R}_{\ge}^3$.
Prove that such \emph{CDS}s have at least 
one unstable equilibrium in $\mathbb{R}_{\ge}^3$, 
and no two-wing chaotic set in $\mathbb{R}_{\ge}^3$.
\end{problem}

\appendix

\section{Appendix: Chemical reaction networks}
\label{app:CRN}
Every CDS induces a canonical set of chemical reactions~\cite{Janos}.
In what follows, given any $x \in \mathbb{R}$, we
define the sign function as
$\textrm{sign}(x) = -1$ if $x < 0$, 
$\textrm{sign}(x) = 0$ if $x = 0$, and 
$\textrm{sign}(x) = 1$ if $x > 0$.

\begin{definition} $($\textbf{Chemical reaction network}$)$ 
\label{def:CRN}
Assume that~$(\ref{eq:DS})$ is a \emph{CDS}. 
Let $m(\mathbf{x}) = \alpha x_1^{\nu_{1}} 
x_2^{\nu_{2}} \ldots x_N^{\nu_{N}}$
be a monomial from $f_i(\mathbf{x})$,
where $\alpha \in \mathbb{R}$
and $\nu_1,\nu_2,\ldots,\nu_N$ are non-negative integers.
Then, the monomial $m(\mathbf{x})
= \textrm{\emph{sign}}(\alpha) 
|\alpha| x_1^{\nu_{1}} x_2^{\nu_{2}} \ldots x_N^{\nu_{N}}$
induces the \emph{canonical chemical reaction}
\begin{align}
\sum_{k = 1}^N \nu_{k} X_k 
& \xrightarrow[]{|\alpha|} 
\left(\nu_{i} + \textrm{\emph{sign}}(\alpha) \right) X_i
+ \sum_{k = 1, k \ne i}^N \nu_{k} X_k, 
\label{eq:CR}
\end{align}
where $X_i$ denotes the chemical species whose concentration is $x_i$.
The set of all such chemical reactions,
induced by all the monomials in 
the vector field $\mathbf{f}(\mathbf{x})$,
is called the \emph{canonical chemical reaction network} (\emph{CRN}) 
induced by~$(\ref{eq:DS})$. 
\end{definition}
\noindent \textbf{Remark}. 
Terms of the form $0 X_i$ are denoted by $\varnothing$,
and interpreted as some neglected species.

For any given CDS, the induced canonical CRN 
from Definition~\ref{def:CRN} is unique.
However, a given CDS can also induce other, non-canonical, CRNs.

\begin{definition} $($\textbf{Fused reaction}$)$ \label{def:fused}
Consider $M$ canonical reactions with identical left-hand sides:
\begin{align}
\sum_{k = 1}^N \nu_{k} X_k & \xrightarrow[]{|\alpha_j|} 
\left(\nu_{k_j} + \textrm{\emph{sign}}(\alpha_j) \right) X_{k_j}
+ \sum_{k = 1, k \ne k_j}^N \nu_{k} X_k, 
\; \; \; j = 1,2, \ldots, M. \nonumber 
\end{align}
Assume that these reactions also have identical coefficients
above the reaction arrows, 
$|\alpha_1| = |\alpha_2| = \ldots = |\alpha_M| = |\alpha|$.
Then, the corresponding \emph{fused reaction} is given by 
\begin{align}
\sum_{k = 1}^N \nu_{k} X_k & \xrightarrow[]{|\alpha|} 
\left(\nu_{k_1} + \textrm{\emph{sign}}(\alpha_1) \right) X_{k_1}
+
\ldots
+
\left(\nu_{k_M} + \textrm{\emph{sign}}(\alpha_M) \right) X_{k_M}
+ \sum_{\substack{k = 1,\\ k \ne k_1,k_2,\ldots,k_M}}^N \nu_{k} X_k.
\nonumber
\end{align}
Any network obtained by fusing reactions in the canonical \emph{CRN}
is called a non-canonical \emph{CRN}.
\end{definition}
\noindent \textbf{Remark}. 
Fusion is possible if and only if a monomial appears
in multiple equations of~(\ref{eq:DS}) 
and is multiplied, up to sign, by identical coefficients.

\begin{definition} $($\textbf{Degree of reaction}$)$ 
\label{def:degree}
Chemical reaction $\sum_{k = 1}^N \nu_{k} X_k 
\xrightarrow[]{\alpha} \sum_{k = 1}^N \nu_{k}' X_k $
is said to be of \emph{degree $\sum_{k = 1}^N \nu_{k}$}. 
\end{definition}

\end{document}